\documentclass[11pt]{article}

\usepackage{authblk}
\usepackage[T1]{fontenc}
\usepackage[utf8]{inputenc}
\usepackage{lmodern,microtype}
\usepackage{amsmath,amssymb,amsthm}
\usepackage[top=2.5cm,bottom=2.5cm,left=2.5cm,right=2.5cm]{geometry}
\usepackage{graphicx,booktabs,array,tabularx}
\usepackage{tikz}
\usetikzlibrary{positioning,arrows.meta,fit,backgrounds}
\usepackage{xcolor}
\usepackage[numbers,square,sort&compress]{natbib}
\usepackage{hyperref}
\hypersetup{
    colorlinks=true,
    linkcolor=blue,
    citecolor=blue,
    urlcolor=blue
    }
\usepackage{bookmark}
\usepackage{enumitem}
\setlist{itemsep=2pt,topsep=5pt}
\newtheorem{theorem}{Theorem}
\newtheorem{lemma}[theorem]{Lemma}

\newtheorem{corollary}[theorem]{Corollary}
\theoremstyle{definition}
\newtheorem{definition}[theorem]{Definition}
\theoremstyle{remark}

\newcommand{\SEP}{\operatorname{SEP}}
\newcommand{\STSP}{\operatorname{STSP}}
\newcommand{\CP}{\operatorname{CP}}
\newcommand{\NP}{\mathsf{NP}}
\newcommand{\PP}{\mathsf{P}}
\newcommand{\coNP}{\mathsf{coNP}}

\newcommand{\cC}{\mathcal{C}}

\newcommand{\xor}{\oplus}
\newcommand{\bigxor}{\bigoplus}

\title{Separating comb inequalities is NP-hard}
\author[1]{Yohan Finet\thanks{yohan.finet@usherbrooke.ca}}
\author[1]{Victor Drouin-Touchette\thanks{victor.drouin-touchette@usherbrooke.ca}}
\affil[1]{Institut Quantique and Département de Génie Électrique et de Génie Informatique, Faculté de Génie, Université de Sherbrooke, Sherbrooke, Québec, J1K 2R1, Canada}

\begin{document}
\maketitle

\begin{abstract}
Comb inequalities are important cutting planes for the symmetric travelling salesman problem, yet the complexity of their exact separation has remained a longstanding question in polyhedral combinatorics. We present a reduction from 3-SAT proving that deciding whether a comb inequality is violated is NP-complete and that the corresponding separation problem is NP-hard. This result holds even when the input vector belongs to the subtour elimination polytope, every edge value is zero, one half or one and the support graph is nonplanar with maximum degree four. The reduction constructs a graph in which six-vertex ladder gadgets encode Boolean relations and cubic graphs enforce consistency among occurrences of each logical variable. For a propositional logic formula with $v$ variables and $m$ clauses, the constructed graph has $40v+70m+30$ vertices and a linear number of positive edges. The reduction also proves hardness when every permitted tooth has two or four vertices with half of the tooth in the handle. We discuss consequences for approximating the maximum comb violation, optimization over the comb relaxation and explain why hardness of separation does not automatically transfer to larger inequality families.
\end{abstract}

\section{Introduction}\label{sec:introduction}

The symmetric travelling salesman problem (TSP) asks for a minimum-weight Hamiltonian cycle in an undirected weighted complete graph. A Hamiltonian cycle visits every vertex exactly once before returning to its starting vertex. Beyond its applications in logistics~\cite{Menger1931, Edmonds1965, Mahalanobis1940, Jessen1942}, biology~\cite{Cox1990, BenDor1997, BenDor2000}, electronics~\cite{Barbagallo1996, Gupta2003, Hirech1998, Makar1998, Lin1973, Grotschel1991} and astronomy~\cite{Carlson1997, ATCA2022}, the TSP has provided a setting in which many of the principal methods of combinatorial optimization were developed and tested~\cite{applegate2007}. Among these methods, cutting planes strengthen a linear programming (LP) relaxation by adding inequalities satisfied by every tour but violated by the current fractional solution~\cite{Dantzig1954}. Their integration with branching yields the branch-and-cut approach used in exact TSP solvers~\cite{Padberg1991, applegate2003}.

The effectiveness of this approach depends in part on the ability to find useful violated inequalities. This task is called \emph{separation}. An important class of such inequalities for the TSP is the class of comb inequalities (see details in Section~\ref{subsec:combs}). Introduced in their general form by Gr\"otschel and Padberg~\cite{Grotschel1979a,Grotschel1979b} in 1979 following earlier work of Edmonds~\cite{Edmonds1965} and Chv\'atal~\cite{Chvatal1973}, the computational complexity of their separation has remained unknown ever since~\cite{applegate2007, Cook2019}. Neither a polynomial-time exact separation algorithm nor an NP-hardness proof was known. It was however conjectured by Padberg and Gr\"otschel~\cite[p.~341]{padberg1985} that there exists a polynomial-time exact separation algorithm for this class of inequalities. Several positive results supporting this conjecture established tractability for restricted comb families or related classes of inequalities. A selection of these are discussed in Section~\ref{subsec:combs}.

In this paper, we prove NP-completeness of deciding if a comb inequality is violated by a reduction from 3-SAT. The reduction shows how Boolean relations, a ladder gadget and cubic graphs can be used to generate a fractional TSP solution from the propositional logic formula. The resulting fractional solution already satisfies every subtour elimination constraint and uses only the edge values $0$, $\tfrac12$, and $1$. An important difficulty is to ensure that an arbitrary violated comb encodes a satisfying assignment. Constructing a violated comb from a satisfying assignment is insufficient since without the reverse implication, unintended teeth could also produce violations for unsatisfiable formulas. We address this difficulty by classifying every tight cut in the constructed graph and showing that each tooth of a violated comb must lie inside a single ladder gadget.

The remainder of this paper is organized as follows. Sections~\ref{subsec:SEP} and~\ref{subsec:combs} introduce the notation and the separation problem. Section~\ref{sec:results} presents the main theorem and its corollaries. Section~\ref{sec:proof} presents the details of the proof with Section~\ref{subsec:equality} explaining how half-integrality constrains the teeth of violated combs and Section~\ref{subsec:ladder} describing the ladder gadget and the Boolean relation it encodes. Section~\ref{subsec:construction} shows how to construct a weighted graph from a 3-SAT formula while Section~\ref{subsec:correctness} proves that each tooth of a violated comb must lie inside a single ladder, shows how the constructed graph can be explicitly made nonplanar and proves both directions of the reduction. Finally, we present our conclusion in Section~\ref{sec:conclusion}.

\subsection{The subtour elimination polytope}\label{subsec:SEP}

Let $G=(V,E)$ be the complete undirected graph on $n=|V|$ vertices. For $S\subseteq V$, let $E(S)$ be the edges with both endpoints in $S$, and let
\begin{equation}
 \delta(S)=\{\{u,v\}\in E:u\in S,\ v\notin S\}
\end{equation}
be its cut. For disjoint vertex sets $A,B$, write $E(A,B)$ for the edges joining them. Given $x\in\mathbb{R}^{|E|}$ and $F\subseteq E$, put $x(F)=\sum_{e\in F}x_e$, and abbreviate $x(E(A,B))$ as $x(A,B)$.

The TSP can be formulated as an integer linear program (ILP) by introducing the edge weights $c_{ij}$ and the binary variables $x_{ij} \in \{0,1\}$ indicating whether edge $(i,j)$ is included in the tour \cite{Dantzig1954}:
\begin{subequations}\label{eq:TSP_ILP}
\begin{alignat}{3}
& \text{minimise} \quad
&& \sum_{i < j} c_{ij} x_{ij} \qquad
&& \label{eq:TSP_ILP:line1}\\
& \text{subject to} \quad
&& \sum_{j \ne i} x_{ij} = 2 \qquad
&& \forall i \in V \label{eq:TSP_ILP:line2}\\
&
&& x(E(S)) \leq |S| - 1 \qquad
&& \forall S \subset V \: : \: 2 \leq |S| \leq n - 1
   \label{eq:TSP_ILP:line3}\\
&
&& x_{ij} \geq 0 \qquad
&& \forall (i,j) \in E \label{eq:TSP_ILP:line4}\\
&
&& x_{ij} \in \{0, 1\} \qquad
&& \forall (i,j) \in E. \label{eq:TSP_ILP:line5}
\end{alignat}
\end{subequations}
Here, the degree constraints \eqref{eq:TSP_ILP:line2} ensure that every vertex is incident to exactly two tour edges, and the subtour elimination constraints \eqref{eq:TSP_ILP:line3} prevent the formation of disconnected cycles over subsets of~$V$. The convex hull in $\mathbb{R}^{|E|}$ of incidence vectors $x$ satisfying constraints \eqref{eq:TSP_ILP:line2} -- \eqref{eq:TSP_ILP:line5} is called the Symmetric Travelling Salesman Polytope and is denoted by $\STSP(n)$. The polytope obtained when removing \eqref{eq:TSP_ILP:line5} is called the Subtour Elimination Polytope and is denoted by $\SEP(n)$. For $n\geq3$, the upper bounds $x_e\leq1$ follow from Eq.~\eqref{eq:TSP_ILP:line3} when taking $S=e$. Summing the degree equations over $S$ also give the useful identity
\begin{equation}\label{eq:degree-identity}
 x(\delta(S)) + 2x(E(S)) = 2|S|.
\end{equation}
Consequently, the form \eqref{eq:TSP_ILP:line3} of the subtour elimination constraints is equivalent to
\begin{equation}\label{eq:SECs_restated}
    x(\delta(S)) \geq 2 \quad (\varnothing\ne S\subset V).
\end{equation}
Note, however, that the slack of Eq.~\eqref{eq:SECs_restated} is twice the slack of Eq.~\eqref{eq:TSP_ILP:line3}.

We use $x$ as a generic vector variable and $x^*$ for a fractional point. Given a class of inequalities $\mathcal{C}$ and a vector $x^*$, a separation algorithm for $\mathcal{C}$ either returns an inequality in $\mathcal{C}$ that is violated by $x^*$ or a failure message. A separation algorithm is exact if it returns a failure message only when all inequalities in $\mathcal{C}$ are satisfied by $x^*$, whereas a heuristic separation algorithm may return a failure message even when some inequality in $\mathcal{C}$ is violated by $x^*$.

$\SEP(n)$ is described by $n$ degree equations, $\binom{n}{2}$ nonnegativity constraints, and exponentially many subtour elimination constraints. Nevertheless, a linear objective can be optimized over $\SEP(n)$ in polynomial time. Indeed, the degree equations and nonnegativity constraints can be checked directly and a polynomial-time exact separation algorithm for subtour elimination constraints was shown by Hong~\cite{Hong1972} and subsequently improved by Crowder and Padberg~\cite{Crowder1980} and Padberg and Rinaldi~\cite{Padberg1990a}. The equivalence between separation and optimization~\cite[Theorem~6.4.9]{Grotschel1993} therefore implies polynomial-time optimization of a linear objective over $\SEP(n)$. The efficient separation of subtour elimination constraints makes $\SEP(n)$ a natural starting point for exact TSP methods. Solvers such as Concorde~\cite{applegate2003} combine subtour separation with routines for additional classes of valid inequalities to tighten their LP relaxation.

We call $x^*$ \emph{half-integral} when $2x^*$ is integral. On $\SEP(n)$ this means $x^*_e\in\{0,\tfrac12,1\}$. The \emph{support graph} is $G^*=(V,E^*)$, where $E^*=\{e:x^*_e>0\}$. Throughout the paper, a set $S$ is \emph{tight} if $\varnothing \neq S \subset V$ and has cut weight $x^*(\delta(S)) = 2$.

\subsection{Comb inequalities and exact separation}\label{subsec:combs}

Comb inequalities, as described by Gr\"otschel and Padberg~\cite{Grotschel1979a,Grotschel1979b}, are defined as follows. As illustrated in Fig.~\ref{fig:comb}, a comb consists of a handle $H\subset V$ and teeth $T_1,\ldots,T_s\subset V$ satisfying the following conditions:
\begin{subequations}\label{eq:comb-conditions}
\begin{align}
 &T_i\cap H\ne\varnothing &&(1\leq i\leq s),\\
 &T_i\setminus H\ne\varnothing &&(1\leq i\leq s),\\
 &T_i\cap T_j=\varnothing &&(i\ne j),\\
 &s\geq3 &&s\text{ odd}.
\end{align}
\end{subequations}
The comb inequality is
\begin{equation}\label{eq:comb}
 x(\delta(H))+\sum_{i=1}^{s}x(\delta(T_i))\geq3s+1
\end{equation}
and we denote its slack by
\begin{equation}\label{eq:slack}
 \sigma_x(H,T_1,\ldots,T_s)
 =x(\delta(H))+\sum_{i=1}^{s}x(\delta(T_i))-3s-1.
\end{equation}
A comb is violated when this quantity is negative.

\begin{figure}
  \centering
  \includegraphics[width=0.45\textwidth]{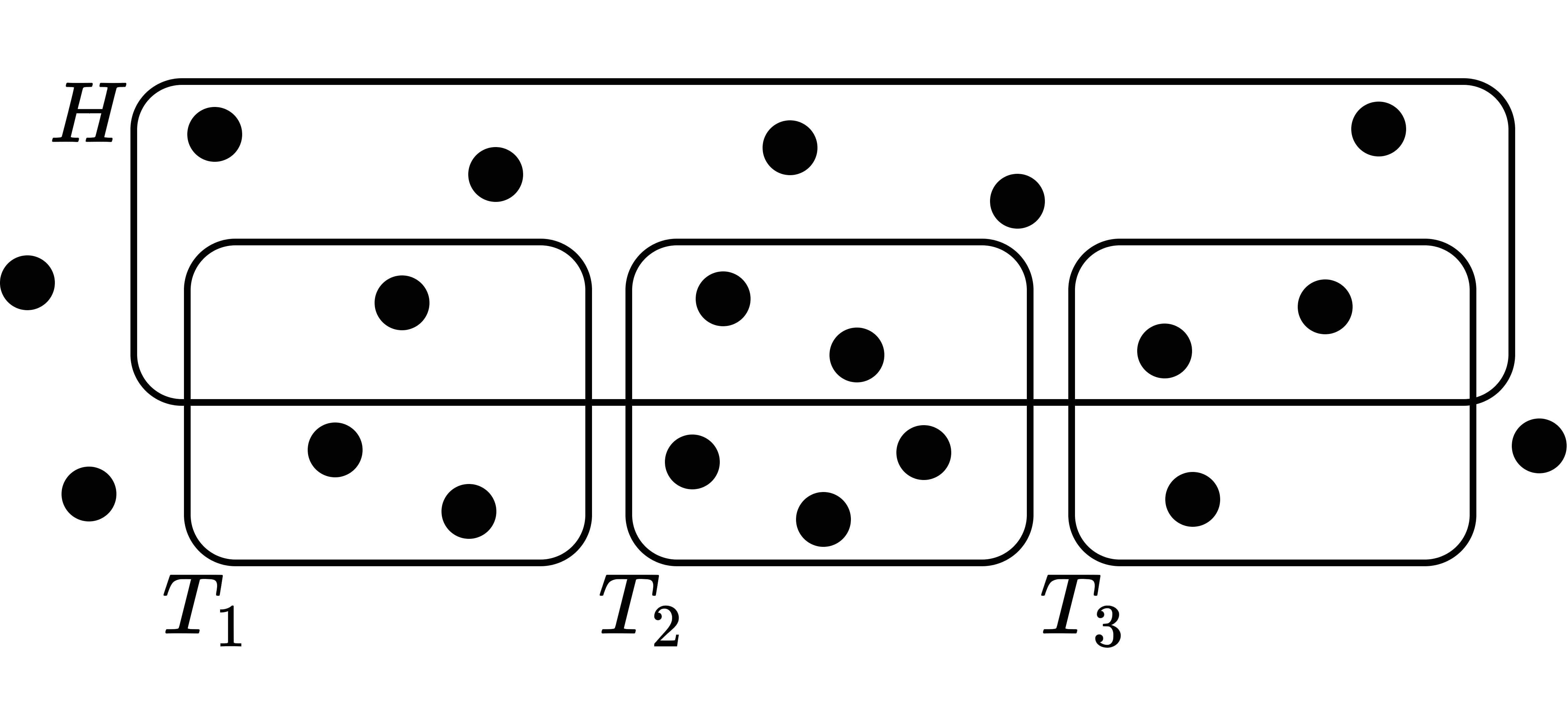}
  \caption{A comb with handle $H$ and teeth $T_1$, $T_2$ and $T_3$. The black dots represent vertices.}
  \label{fig:comb}
\end{figure}

Comb inequalities are powerful cutting planes for the symmetric travelling salesman problem because they exclude fractional solutions that satisfy all degree and subtour-elimination constraints, while preserving every feasible tour. Moreover, they were proven to induce facets of $\STSP(n)$ for ${n \geq 6}$~\cite{Grotschel1979a,Grotschel1979b}. Their practical importance was illustrated by Applegate et al.~\cite[Table~16.3]{applegate2007} who solved, using the Concorde solver~\cite{applegate2003}, 15 TSPLIB~\cite{reinelt1991} instances selected from those having between 1000 and 2500 cities. Disabling both the consecutive-ones and domino-based comb separation heuristics~\cite{applegate1995} more than doubled the reported running time on this test set, demonstrating the substantial contribution of the cuts provided by these routines to Concorde's computational performance. Additionally, for the much larger instance sw24978, they show that about 30\% of the non-subtour inequalities that were generated and used were comb inequalities. Similarly, for the very large pla85900 instance, about 31\% of the cutting planes defining the root LP relaxation (including subtour elimination constraints) were comb inequalities.

Polynomial-time separation algorithms are known for several restricted families of comb inequalities and related classes of inequalities. Padberg and Rao~\cite{Padberg1982} developed a polynomial-time separation algorithm for the 2-matching inequalities, also known as blossom inequalities, which correspond to comb inequalities whose teeth each contain two vertices. Carr~\cite{Carr1997} showed that comb inequalities with a fixed number of teeth $s$ can be separated by solving $O(n^{2s})$ maximum-flow problems, where $n$ is the number of vertices. Since $s$ may be any odd integer satisfying $3 \leq s \leq \lfloor n/2 \rfloor$, this approach does not yield a polynomial-time algorithm when the number of teeth is unrestricted. Fleischer and Tardos~\cite{Fleischer1999} described an $O(n^2 \log n)$-time algorithm that finds maximally violated combs when $x^* \in \SEP(n)$ and the support graph is planar. A comb is maximally violated if its slack is $-1$ in Eq.~\eqref{eq:slack}, the largest possible violation when $x^* \in \SEP(n)$. Letchford~\cite{Letchford2000} introduced the larger family of domino-parity inequalities and separated these under the same planarity assumption in $O(n^3)$ time. This result was further generalized by Cook, Espinoza and Goycoolea~\cite{Cook2010} to a larger family for which a subset containing all the clique-tree inequalities can be separated in polynomial time, provided that the support graph is planar and the number of handles is bounded by a fixed constant. Caprara, Fischetti and Letchford~\cite{Caprara2000} demonstrated that \{0,$\frac{1}{2}$\}-cuts form a family of inequalities that includes comb inequalities and showed how to detect maximally violated \{0,$\frac{1}{2}$\}-cuts in $O(n^2|E^*|)$ time. Fleischer, Letchford, and Lodi~\cite{Fleischer2006} also established polynomial-time separation for a superclass of simple comb inequalities when the fractional solution is in the subtour elimination polytope. Note that these algorithms for larger families may return violated inequalities that are not comb inequalities. Consequently, the results presented by Caprara, Fischetti and Letchford, for example, do not automatically extend to comb separation under the same assumptions since their algorithm may return a maximally violated \{0,$\frac{1}{2}$\}-cut that is not a comb inequality whether a maximally violated comb inequality exists or not. Other heuristic separation approaches were developed by Padberg and Rinaldi~\cite{Padberg1990b}, Clochard and Naddef~\cite{Clochard1993}, and Naddef and Thienel~\cite{Naddef2002}. Many other valid or faced-defining inequalities for $\STSP(n)$ exist, see for example \cite{Naddef2001} for a more extensive list.

\section{Results}\label{sec:results}

\subsection{NP-completeness and proof outline}

\begin{figure}[!b]
  \centering
  \includegraphics[width=1\textwidth]{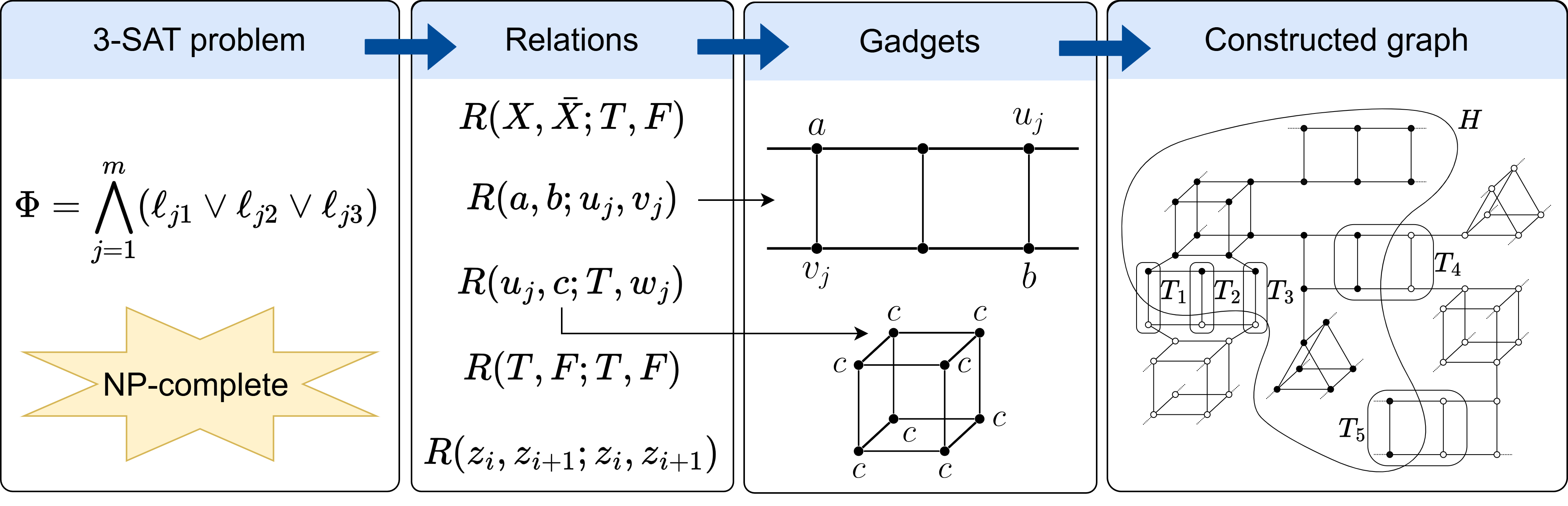}
  \caption{Outline of the reduction. From a 3-SAT formula, relations of the form $R(a,b;c,d): {a+b=c+d}$ are first introduced. Each relation is then encoded by a ladder gadget and each relation variable is represented by a cubic graph gadget whose vertices connect to the corresponding ladder ports. For a given comb, we color every vertex in the handle in black (assign value~1) and all other vertices in white (value~0). In any violated comb for this constructed graph, we show that ladder edges can only be cut in specific ways by the handle, only teeth of specific forms are allowed, each tooth must be contained in a ladder and each cubic graph gadget must lie entirely inside the handle or outside it. For a violated comb, the common vertex color of the cubic graph associated to each logical variable yields a satisfying assignment for the 3-SAT formula. Conversely, any satisfying assignment determines vertex colors and compatible handle and tooth choices that form a violated comb.}
  \label{fig:proof_outline}
\end{figure}

The decision problem corresponding to the separation of comb inequalities asks only whether a violated comb exists. Here the input is a fractional point of $\SEP(n)$, represented with finite binary encoding. This promise can be checked in polynomial time by verifying nonnegativity and the degree constraints and applying the separation procedure for subtour elimination constraints as discussed in Section~\ref{subsec:SEP}. Equivalently, one may define a decision language that rejects inputs outside $\SEP(n)$.

In this work, we use the standard complexity classes $\PP$ and $\NP$ and show a polynomial-time many-one reduction from 3-SAT to deciding if a comb is violated for an input $x^* \in \SEP(n)$. A many-one reduction constructs one target instance in polynomial time and preserves both yes and no answers. A problem is NP-hard if every problem in $\NP$ reduces to it in polynomial time, and NP-complete if it is also in $\NP$~\cite{GareyJohnson1979}.

NP-completeness of the decision problem for 3-SAT follows from the work of Cook~\cite{Cook71} and Karp~\cite{Karp1972}. The reduction to deciding comb violation, as illustrated in Fig~\ref{fig:proof_outline}, begins by introducing relations of the form $R(a,b;c,d): a+b=c+d$. These relations encode the constraints of a given 3-SAT formula and are used to show that each tooth of a violated comb must be contained within a certain subset. Each relation is represented by a ladder gadget, with each port assigned a variable of the relation, as shown for $R(a,b;u_j, v_j)$ with vertices $a$, $b$, $u_j$ and $v_j$ in Fig.~\ref{fig:proof_outline}. Each relation variable is encoded by a cubic graph (graph where each vertex has 3 edges) gadget. A graph is constructed by connecting the vertices of each cubic graph gadget to ladder ports corresponding to the same variable. Each edge of the graph is assigned a weight of $0$, $\tfrac12$ or $1$. We further show that the resulting constructed support graph can explicitly be made nonplanar and that the corresponding fractional point $x^* \in \SEP(n)$. As every vertex of the constructed graph has weighted degree two, as required by the degree equations, and each edge has a weight of $0$, $\tfrac12$ or $1$, the degree of a vertex in the support graph can be at most four.

We show that any violated comb in the constructed graph restricts how the handle can cut ladder edges and forces each cubic graph gadget to lie entirely inside or outside the handle. For these combs, we also show that each tooth must lie within a ladder. These restrictions result in having all vertices assigned to a given relation variable being either all inside the handle or all outside it. Coloring vertices inside the handle black (value~1) and those outside white (value~0) therefore assigns a Boolean value to each relation variable. The restrictions on the structure of violated combs ensure that these values satisfy the encoded relations and yield a satisfying assignment for the original 3-SAT formula.

Conversely, a satisfying assignment for the formula determines compatible vertex colors, which yields a handle and a choice of teeth that together form a violated comb. Thus, the formula is satisfiable if and only if the constructed instance admits a violated comb, which leads to Theorem~\ref{thm:main}.

\begin{theorem}\label{thm:main}
Deciding whether a fractional point $x^*\in\SEP(n)$ violates a comb inequality is \mbox{NP-complete}, even when every coordinate belongs to $\{0,\tfrac12,1\}$ and the support graph is nonplanar with maximum degree four. Consequently, the associated separation problem is NP-hard, even under these restrictions.
\end{theorem}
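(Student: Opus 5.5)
The proof has two parts: membership in $\NP$, and a polynomial many-one reduction from 3-SAT along the lines of Fig.~\ref{fig:proof_outline}.

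\textbf{Membership in $\NP$.} The certificate is a comb $(H,T_1,\dots,T_s)$. Since the teeth are disjoint, $s\le n$, so the certificate has polynomial size. The verifier checks the conditions \eqref{eq:comb-conditions} and evaluates the slack \eqref{eq:slack} in polynomial time. Separately, the promise $x^*\in\SEP(n)$ can be checked in polynomial time, as explained in Section~\ref{subsec:SEP}.

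\textbf{Structural facts for half-integral points.} For hardness, I first record what half-integrality forces. Suppose $x^*\in\SEP(n)$ is half-integral. By \eqref{eq:degree-identity}, every cut value $x^*(\delta(S))$ is an even multiple of $\tfrac12$. Since \eqref{eq:SECs_restated} gives $x^*(\delta(S))\ge2$, each cut has value $2$ or at least $3$. The comb inequality reads $x(\delta(H))+\sum_i x(\delta(T_i))\ge3s+1$, and each tooth contributes at least $2$. A parity argument on the comb (summing degree identities over $H$ and the $T_i\cap H$) shows the slack is an integer. Hence a violated comb has slack exactly $-1$, which is maximal violation. Counting then gives that either all teeth are tight and $x^*(\delta(H))=s$, or a bounded number of teeth carry extra weight. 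I would strengthen this to: in any violated comb, essentially every tooth is a tight set, and each tight tooth crosses $H$ in a prescribed way, namely $x^*(T_i\cap H,T_i\setminus H)=1$.

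\textbf{Construction.} I would then build the gadgets.
\begin{enumerate}
\item Rewrite each clause using auxiliary variables as relations $R(a,b;c,d)$ with $a+b=c+d$ over $\{0,1\}$. For example, a clause is satisfied iff its literal sum lies in $\{1,2,3\}$, which can be encoded through chains of such relations with fresh slack variables and complemented copies.
\item Implement each relation by a six-vertex ladder: two rails of three vertices with rungs and half-edges. Its four ports attach to vertices of cubic-graph gadgets representing the relation variables. One cubic gadget per relation variable makes all its occurrences share a color, because cutting a cubic graph costs too much.
\item Pick edge weights so every vertex has weighted degree $2$ and all values lie in $\{0,\tfrac12,1\}$. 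Then support degree is at most $4$.
\item Verify $x^*\in\SEP(n)$ by a global min-cut argument: cubic gadgets are $3$-edge-connected in their half-edges, and ladders contribute weight-$1$ paths.
\item Insert a $K_{3,3}$ or $K_5$ subdivision, for instance via one of the cubic gadgets, to force nonplanarity. Count vertices to reach the stated size.
\end{enumerate}

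\textbf{Correctness.} In the forward direction, a satisfying assignment colors each cubic gadget black or white. The handle $H$ is the set of black vertices. Inside each ladder, a tooth of two or four vertices straddling the boundary is chosen so that each ladder with a consistent relation contributes the right amount. Adding one gadget tooth if needed fixes the parity so that $s$ is odd, and the slack then computes to $-1$.

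In the reverse direction, I would classify all tight cuts of the constructed $x^*$ and show each one is either a union of whole gadgets or lies inside a single ladder. Together with the structural facts above, every tooth sits in a ladder and no cubic gadget is split by $H$, since splitting a gadget adds at least $3$ units to $x^*(\delta(H))$, more than the available slack. Reading off colors then gives consistent Boolean values, and the local ladder analysis shows each relation holds, so the formula is satisfied. NP-hardness of separation follows because an exact separation oracle decides this problem.

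\textbf{Main obstacle.} The hardest step is the reverse direction: ruling out "unintended" violated combs whose teeth span several gadgets or whose handle cuts gadgets cheaply. I expect to handle it by an exhaustive classification of tight sets, using the $3$-edge-connectivity of cubic gadgets in the support graph, and by a careful accounting of the exact slack $-1$ budget across ladders.
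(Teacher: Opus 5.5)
Your outline has the right architecture (ladders, cubic variable gadgets, a tight-cut classification), but two mechanisms the reduction depends on are missing. The first is parity. Your forward direction restores odd parity by ``adding one gadget tooth if needed'', but nothing in your construction provides such a switch, and none comes for free.

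For $x^*\in\SEP(n)$ the slack satisfies $\sigma=r+\sum_i w(A_i,B_i)-1$, with residual $r\ge0$ and domino weights $w\ge0$. This identity, not integrality (which only gives $\sigma\le-1$), yields $\sigma\ge-1$. At a half-integral point it follows that a violated comb has every tooth \emph{and both its halves} tight, and $r=0$. So the handle cut is exactly the disjoint union of unit-weight semicuts, and $s=x^*(\delta(H))$ is a function of $H$. Inside a ladder, the tooth count has parity $a\oplus b$, fixed by the port colors. The parity of $s$ is therefore determined by the Boolean assignment and must be engineered into the relation system.

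The paper does this as follows. It duplicates every complement and clause relation, so paired parities cancel. It arranges padding relations in a cycle, so their parities cancel. It adds one anchor ladder $R(T,F;T,F)$ of parity $T\oplus F$. Then $s$ is odd exactly when $T\ne F$, and that ladder's opposite-rail state supplies three teeth, so $s\ge3$.

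Relatedly, you never say where the constant ``true'' in your clause encoding comes from. Every relation $a+b=c+d$ is invariant under complementation and is satisfied by constant assignments. You therefore need reference variables $T,F$ (as in $R(X,\bar X;T,F)$ and $R(u_j,c;T,w_j)$) and a reason they receive different colors in every violated comb.

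The second gap is the classification. Your reverse direction assumes ``every tight set is a union of whole gadgets or lies inside a single ladder'', which is both false and insufficient. For any ladder, $V\setminus(P_2\cup P_3)$ is tight but is neither of these. Together with two rung teeth inside $P_2\cup P_3$, it is a structurally admissible tooth. Conversely, any tight union of whole gadgets would be a candidate tooth or half that you never exclude.

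What is needed is that one shore of every tight cut is a single variable-gadget vertex or lies inside one ladder. This requires two ingredients missing from your plan:
\begin{itemize}
\item Every variable must occur at least six times, so that a whole cubic gadget has cut weight $N_z/2\ge3$. A $K_4$ gadget would itself be tight.
\item The padding ladders $R(z_i,z_{i+1};z_i,z_{i+1})$ must link all variables in a cycle. Then, for example, any coloring of whole gadgets using both colors crosses two padding ladders (at least $1$ each). By connectivity of the incidence graph it also crosses a non-padding ladder (at least $\tfrac12$), for a total of at least $\tfrac52$.
\end{itemize}

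Even then, teeth of the form $V\setminus S$ with $S$ inside one ladder $U$ survive the classification. The paper handles them in three steps. It uses tightness of the halves to put all of $V\setminus U$ into one half, contracts that set to a root, and checks that the seven-vertex rooted ladder $L^\circ$ has no violated comb.

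Finally, once teeth are known to lie in ladders, it is the zero residual, not a cost budget, that makes each variable gadget monochromatic. Also, splitting a half-weight cubic gadget costs $\tfrac32$, not $3$.
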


\begin{figure}[t]
  \centering
  \includegraphics[width=0.98\textwidth]{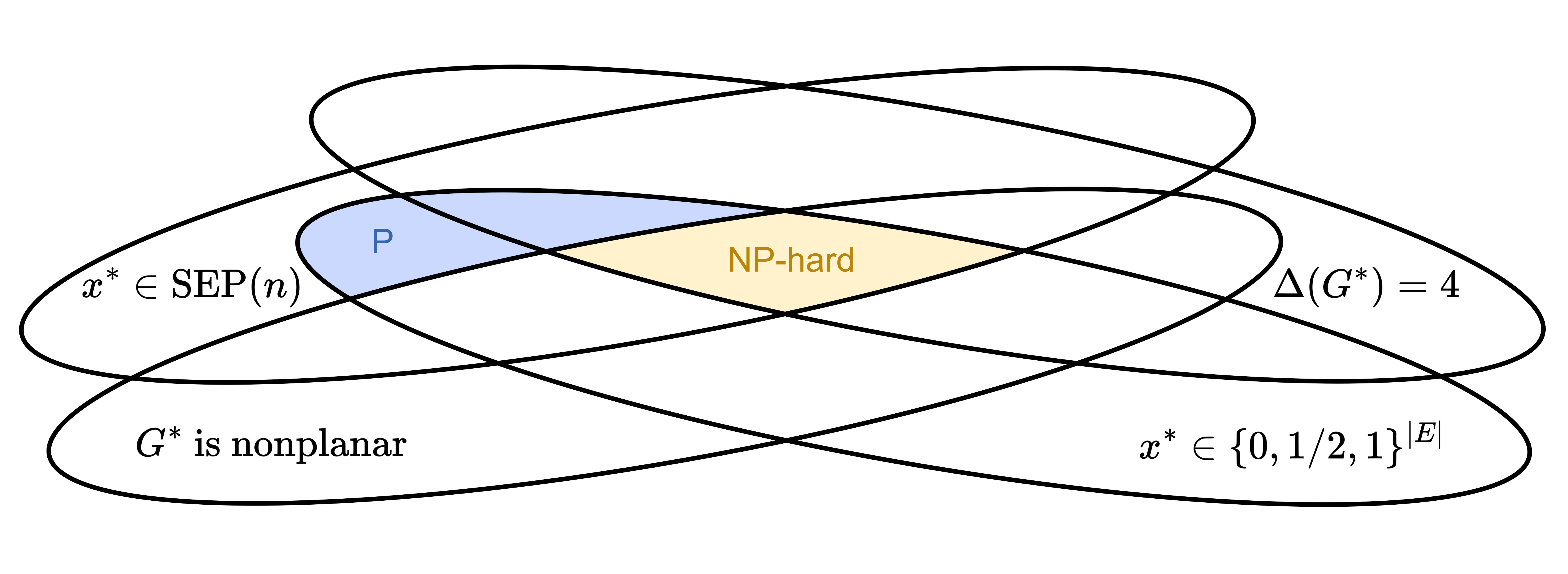}
  \caption{Venn diagram of the vectors considered in this paper. NP-hardness is proven for the vectors in the intersection of the four sets which establishes NP-hardness for arbitrary vectors. Some restricted classes nevertheless admit polynomial-time separation. In particular, by Lemma~\ref{lem:equality}, for $x^*\in\SEP(n)\cap\{0,\tfrac12,1\}^{|E|}$, every violated comb inequality is maximally violated. If the support graph $G^*$ is also planar, then the algorithm of Fleischer and Tardos~\cite{Fleischer1999} yields polynomial-time comb separation.}
  \label{fig:Venn_diagram}
\end{figure}

Our reduction proves NP-hardness for vectors in $\SEP(n)$, have each of their elements in $\{0,\tfrac12,1\}$ and have a nonplanar support graph with maximum degree $\Delta(G^*)=4$. Allowing arbitrary coordinates in $[0,1]^{|E|}$ retains NP-hardness because the constructed instances are still admissible inputs. However, this does not exclude polynomial-time separation for restricted classes of inputs as illustrated in Fig.~\ref{fig:Venn_diagram}. For instance, when $x^* \in \SEP(n)$ and $x^* \in \{0, \tfrac12, 1\}$, violated combs are all maximally violated (see Lemma~\ref{lem:equality}). Therefore, for these vectors, when $G^*$ is planar, the polynomial-time algorithm of Fleischer and Tardos~\cite{Fleischer1999} can be used so the separation problem for these inputs is in $\PP$. The hardness result neither establishes average-case hardness nor rules out polynomial-time separation for restricted comb families. For example, blossom inequalities admit polynomial-time separation~\cite{Padberg1982}.

\subsection{Corollaries and consequences}

This section presents several corollaries of Theorem~\ref{thm:main}. Short proofs are included here, while proofs that require intermediate results from the reduction are deferred to Section~\ref{subsubsec:cor_proofs}.

\begin{corollary}\label{cor:small-teeth}
Deciding whether a violated comb exists remains NP-complete if admissible teeth are restricted to size two or four, with $|T_i\cap H|=|T_i\setminus H|\in\{1,2\}$ for every tooth. The same restrictions on the input point as in Theorem~\ref{thm:main} apply.
\end{corollary}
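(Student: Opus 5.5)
The plan is to reuse the reduction behind Theorem~\ref{thm:main} unchanged and show that it already proves hardness for the restricted comb family. Given a 3-SAT formula, build the same half-integral point $x^*\in\SEP(n)$ whose support graph is nonplanar with maximum degree four. Membership in $\NP$ is immediate, since the restriction on teeth is checkable in polynomial time. A comb with teeth of size two or four, split evenly by $H$, serves as a certificate whose slack is computed directly.

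Both directions of the reduction must be checked for the restricted family.

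For the direction \emph{unsatisfiable $\Rightarrow$ no violated comb}, nothing new is required. Every restricted comb is in particular a comb. The proof of Theorem~\ref{thm:main} shows that the constructed point violates no comb at all when the formula is unsatisfiable, so it violates no restricted comb either.

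For the direction \emph{satisfiable $\Rightarrow$ violated restricted comb}, I would inspect the violated comb produced in the converse part of the correctness proof (Section~\ref{subsec:correctness}). Given a satisfying assignment, one colours the cubic gadgets, takes the handle $H$ to be the black vertices, and chooses one tooth inside each ladder gadget.

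The key step is to verify that every tooth chosen there has exactly two or four vertices, with $|T_i\cap H|=|T_i\setminus H|$. I would use the classification of tight cuts in the six-vertex ladder together with Lemma~\ref{lem:equality}. Since every violated comb here is maximally violated, each tooth must be tight, and the handle must cut the ladder in one of the permitted ways.

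I would enumerate the tight subsets of a ladder that meet both $H$ and its complement. Where several tight teeth are available, I would select the one of the required form: either a single rung edge with $x^*_e=1$ crossing the handle, or a $2\times2$ block of the ladder with one rung inside $H$ and one outside. I would then check that this choice keeps the total slack at $-1$ and that the number of teeth stays odd.

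The main obstacle is this last check. One must confirm that, for every admissible colouring pattern of a ladder's four ports allowed by the relation $a+b=c+d$, some tight tooth of size two or four with a balanced split exists inside that ladder. If a pattern allows only a tight tooth of size three or six, I would first try the complementary tooth within the ladder, or a shifted $2\times2$ window. If that fails, I would adjust the parity by changing the tooth choice in a different gadget.
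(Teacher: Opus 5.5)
Your two-direction plan is the paper's. The unsatisfiable direction follows by containment from Theorem~\ref{thm:main}. The satisfiable direction consists of checking the teeth of the comb built in the converse part of that proof.

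As written, however, the satisfiable direction has a gap. First, the comb you describe, with ``one tooth inside each ladder gadget,'' is not the comb of the construction, and it would not be violated:
\begin{itemize}
\item A ladder whose six vertices share one colour cannot contain a tooth at all, since a tooth must meet both $H$ and $V\setminus H$.
\item A ladder in the opposite-rail state has all three rungs in $\delta(H)$. The zero-residual condition of Lemma~\ref{lem:equality} then forces all three rung teeth there.
\end{itemize}
So ladders carry $0$, $1$ or $3$ teeth, not one each.

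Second, you leave the only substantive check open. Your fallback of adjusting parity by changing the tooth choice in another gadget cannot work. By Lemma~\ref{lem:relation}, in any zero-residual configuration each ladder's tooth count has parity $h(a_1)+h(b_3)$, which is fixed by its port colours. There is therefore no parity to trade between gadgets.

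The missing observation is recorded in the last sentence of Lemma~\ref{lem:relation} and in Table~\ref{tab:states}. The middle rung $\{a_2,b_2\}$ has no external edges, so in the opposite-end-rung state you may colour it like rung three. The handle then cuts only the two rails between rungs one and two. The single tooth $(P_1,P_2)$ has exactly this semicut, and it has size four with two vertices on each side of $H$.

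This gives the restricted form in every ladder:
\begin{itemize}
\item opposite-rail ladders use the rung teeth $(\{a_i\},\{b_i\})$, of size two with one vertex on each side;
\item opposite-end-rung ladders use $(P_1,P_2)$ as above;
\item uniform ladders use no teeth.
\end{itemize}
The six-vertex teeth $(P_1,P_2\cup P_3)$ and $(P_1\cup P_2,P_3)$ are never needed. There is also no tight tooth of size three inside a ladder, so that contingency does not arise.

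Oddness of the tooth count and $s\geq 3$ carry over unchanged from the proof of Theorem~\ref{thm:main}. The anchor ladder alone contributes three rung teeth.
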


This corollary distinguishes a bound on tooth size from a bound on the number of teeth. The latter grows with the formula in our construction, so there is no conflict with Carr's polynomial-time result for fixed numbers of handles and teeth~\cite{Carr1997}.

\begin{corollary}\label{cor:fixed-gap}
Under the slack definition of Eq.~\eqref{eq:slack}, define
\begin{equation}\label{eq:max-violation}
    V(x)=\max\bigl(\{0\}\cup\{-\sigma_{x}(C):C\text{ is a comb}\}\bigr).
\end{equation}
For every half-integral point $x^*\in\SEP(n)$, we have $V(x^*)\in\{0,1\}$. Unless $\PP=\NP$, no polynomial-time algorithm on such points can:
\begin{enumerate}
    \item always estimate $V(x^*)$ with absolute error strictly less than $1/2$; or
    \item for any fixed $\varepsilon\in(0,1]$, return a violated comb whenever a comb with violation at least $\varepsilon$ exists.
\end{enumerate}
In particular, the polynomial-time guarantee of Fleischer and Tardos~\cite{Fleischer1999} cannot extend to arbitrary support graphs unless $\PP=\NP$.
\end{corollary}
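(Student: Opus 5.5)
The plan has two parts: first show that the maximum violation takes only the values $0$ and $1$ on half-integral points of $\SEP(n)$, and then derive both inapproximability claims from Theorem~\ref{thm:main} by a standard gap argument.

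For the first part, let $x^*\in\SEP(n)$ be half-integral. By the identity \eqref{eq:degree-identity}, every cut value $x^*(\delta(S))=2|S|-2x^*(E(S))$ is an integer. Hence the slack $\sigma_{x^*}(C)$ of any comb is an integer. I then bound the possible violation from below. Each tooth satisfies $x^*(\delta(T_i))\ge 2$ by \eqref{eq:SECs_restated}, and the handle satisfies $x^*(\delta(H))\ge 2$, so the left side of \eqref{eq:comb} is at least $2s+2$. This is too weak by itself, so I will invoke Lemma~\ref{lem:equality} as it is cited in the introduction and in Fig.~\ref{fig:Venn_diagram}: for half-integral $x^*\in\SEP(n)$, every violated comb is maximally violated, meaning its slack equals $-1$. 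Alternatively, I can use the classical fact that $\sigma_x(C)\ge -1$ on $\SEP(n)$. This fact follows from the standard parity argument: sum the inequalities $x(\delta(T_i\cap H))\ge2$, $x(\delta(T_i\setminus H))\ge2$ and $x(\delta(T_i))\ge 2$ together with the degree equations, which gives $2\,\mathrm{LHS}\ge 6s$ plus a correction, and then round. Combining integrality with $\sigma\ge-1$ yields $-\sigma_{x^*}(C)\in\{\dots,-1,0,1\}$ for every comb, and therefore $V(x^*)\in\{0,1\}$.

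For item 1, suppose a polynomial-time algorithm outputs an estimate $\tilde V$ with $|\tilde V-V(x^*)|<1/2$. Then $\tilde V>1/2$ holds if and only if $V(x^*)=1$, which holds if and only if some comb is violated. Applying this to the half-integral instances of Theorem~\ref{thm:main} would decide an NP-complete problem in polynomial time.

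For item 2, fix $\varepsilon\in(0,1]$. Any violated comb on such points has violation exactly $1\ge\varepsilon$. So an algorithm that returns a violated comb whenever one with violation at least $\varepsilon$ exists returns one precisely when a violated comb exists. This decides the problem of Theorem~\ref{thm:main}, again implying $\PP=\NP$.

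The final sentence about Fleischer and Tardos follows because their algorithm finds maximally violated combs, and on these inputs every violated comb is maximally violated. An extension of their algorithm to nonplanar support graphs would therefore be an exact separation routine for the hard instances, which by Theorem~\ref{thm:main} has maximum degree four and nonplanar support.

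The only real obstacle is the lower bound $\sigma\ge-1$, that is, the maximal-violation bound. I would rely on Lemma~\ref{lem:equality} as stated in the paper rather than reprove it.
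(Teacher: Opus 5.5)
Your proposal is correct and follows the paper's proof essentially step for step: integrality of slacks plus Lemma~\ref{lem:equality} gives $V(x^*)\in\{0,1\}$, and both items then reduce to deciding the NP-complete problem of Theorem~\ref{thm:main}. The one detail to make explicit, as the paper does, is that in item 2 you must verify the returned candidate (comb conditions and slack) in polynomial time. This check is needed because the algorithm's output is unconstrained when no violated comb exists.
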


\begin{proof}
Lemma~\ref{lem:equality} states that every violated comb at a half-integral point of $\SEP(n)$ has a slack of exactly~$-1$. Thus $V(x^*)\in\{0,1\}$, with $V(x^*)=1$ precisely when $x^*$ violates a comb inequality.

Assume a polynomial-time algorithm with property 1 exists. An estimate $\widehat V(x^*)$ with absolute error strictly less than $1/2$ satisfies $\widehat V(x^*)>1/2$ if and only if $V(x^*)=1$, and therefore decides whether a violated comb exists. By Theorem~\ref{thm:main}, this decision problem is $\NP$-complete even on half-integral points of $\SEP(n)$. That would therefore imply $\PP=\NP$.

Likewise, assume a polynomial-time algorithm with property 2 exists and fix $\varepsilon\in(0,1]$. For half-integral points of $\SEP(n)$ every violated comb has violation $1\ge\varepsilon$, so the algorithm must return a violated comb whenever one exists. Checking any returned candidate in polynomial time therefore also decide whether a violated comb exists. By Theorem~\ref{thm:main} that would again imply $\PP=\NP$.
\end{proof}

Define the polytope $\CP(n)$ as the set of points that are in $\SEP(n)$ and satisfy every comb inequality:
\begin{equation}\label{eq:comb-relaxation}
    \CP(n)
    =\{x^*\in\SEP(n):\sigma_{x^*}(C)\geq0
      \text{ for every comb }C\}.
\end{equation}
$\coNP$ consists of the complements of languages in $\NP$~\cite{Arora2009}:
\begin{equation}
    \coNP = \{L : \overline{L} \in \NP\}.
\end{equation}

\begin{corollary}
\label{cor:comb-membership}
Given $n$ and a vector $x^*$ encoded in binary, deciding
whether $x^*\in \CP(n)$ is coNP-complete.
\end{corollary}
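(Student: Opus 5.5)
Deciding $x^*\in\CP(n)$ is coNP-complete by two separate arguments: membership in $\coNP$, and hardness inherited from Theorem~\ref{thm:main} by complementation.

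\emph{Membership in $\coNP$.} It suffices to show that the complement language is in $\NP$. The complement consists of encodings $(n,x^*)$ with $x^*\notin\CP(n)$, together with malformed inputs. A point $x^*$ fails to be in $\CP(n)$ for one of two reasons. Either $x^*\notin\SEP(n)$, which is decidable in polynomial time: check the degree equations and nonnegativity, and use the exact subtour separation routine of Hong or Padberg--Rinaldi cited in Section~\ref{subsec:SEP}. Or $x^*\in\SEP(n)$ and some comb $C=(H,T_1,\ldots,T_s)$ has $\sigma_{x^*}(C)<0$. In the second case the certificate is the comb itself. It is given by at most $s+1\le n+1$ subsets of $V$, so its size is $O(n^2)$ bits. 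The verifier then does three things in time polynomial in the input encoding:
\begin{itemize}
  \item checks conditions \eqref{eq:comb-conditions};
  \item computes the cut values $x^*(\delta(H))$ and $x^*(\delta(T_i))$ exactly by rational arithmetic;
  \item compares their sum with $3s+1$.
\end{itemize}
The sums involve at most $O(n^3)$ binary rationals taken from the input, so their bit length stays polynomial. Hence the complement is in $\NP$.

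\emph{coNP-hardness.} Let $L$ be the language of Theorem~\ref{thm:main}: points of $\SEP(n)$ that violate some comb inequality. The reduction behind that theorem maps a 3-SAT formula $\varphi$ in polynomial time to a point $x^*_\varphi\in\SEP(n)$. Under this map, $\varphi$ is satisfiable if and only if $x^*_\varphi$ violates a comb, that is, if and only if $x^*_\varphi\notin\CP(n)$. Because $x^*_\varphi$ always lies in $\SEP(n)$, the same map sends $\varphi$ to a yes-instance of $\CP$-membership exactly when $\varphi$ is unsatisfiable. It is therefore a polynomial-time many-one reduction from 3-UNSAT, which is coNP-complete, to deciding $x^*\in\CP(n)$. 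Equivalently, every $\coNP$ language reduces to our problem, because its complement reduces to comb violation and many-one reductions commute with complementation.

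The step that needs the most care is the interface between the two languages. The decision problem of Theorem~\ref{thm:main} is a promise problem on inputs in $\SEP(n)$, whereas $\CP$-membership is defined on arbitrary binary vectors. This causes no difficulty because the constructed instances always satisfy the promise, and $\SEP(n)$-membership is itself polynomially decidable, so the two formulations agree up to a polynomial-time preprocessing step. The other point to verify is that the certificate check stays polynomial in the bit size of $x^*$ and not just in $n$. This follows because the check only adds and compares $O(n^2)$ rational coordinates per cut.
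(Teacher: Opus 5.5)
Your proof is correct and takes essentially the same route as the paper. Non-membership is certified by a violated comb, with $\SEP(n)$-membership checked directly in polynomial time where the paper uses an explicit certificate, and hardness comes from Theorem~\ref{thm:main} by complementation. The only cosmetic difference is that you compose with the 3-SAT reduction to get an explicit many-one reduction from 3-UNSAT, whereas the paper first shows non-membership is NP-complete and then uses the general fact that complements of NP-complete languages are coNP-complete.
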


\begin{proof}
For a vector $x^*\notin\CP(n)$, a certificate consists of a violated nonnegativity constraint, degree equation, subtour elimination constraint or comb inequality. Each certificate has a polynomial-size description and can be checked in polynomial time. Thus, deciding if $x^*\notin\CP(n)$ is in $\NP$ and deciding if $x^*\in\CP(n)$ belongs to $\coNP$.

For $x^* \in \SEP(n)$, deciding if $x^*\notin\CP(n)$ is equivalent to deciding if a comb is violated. By Theorem~\ref{thm:main}, deciding whether a violated comb exists is NP-complete even when $x^* \in \SEP(x)$. Therefore, deciding if $x^*\notin\CP(n)$ is NP-hard and given the certificate argument above this proves that deciding if $x^*\notin\CP(n)$ is NP-complete.

The complement of an NP-complete language is coNP-complete. To see why, take an NP-complete language $L$. Since $L \in \NP$, we have $\overline{L} \in \text{coNP}$ by definition. For $A \in \coNP$, we have $\overline{A} \in \NP$. the NP-hardness of $L$ yields a polynomial-time reduction $f$ such that
\begin{equation}
    w \in \overline{A} \iff f(w) \in L.    
\end{equation}
By taking the complement of both statements we get
\begin{equation}
    w \in A \iff f(w) \in \overline{L},
\end{equation}
so $A \le_p \overline{L}$.
\end{proof}

A related consequence follows from the equivalence between separation and optimization for well-described polytopes~\cite{Grotschel1993}. In this setting, ``well-described'' means that a bound is known on the binary encoding length of each inequality, even though there may be exponentially many inequalities.

\begin{corollary}
\label{cor:comb-optimization}
Exact optimization of arbitrary linear objectives over
$\CP(n)$ is NP-hard under polynomial-time Turing
reductions.
\end{corollary}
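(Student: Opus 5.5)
The plan is to combine Theorem~\ref{thm:main} with the equivalence between separation and optimization of Gr\"otschel, Lov\'asz and Schrijver~\cite[Theorem~6.4.9]{Grotschel1993}. That equivalence says that for a well-described polyhedron, a strong optimization oracle yields a strong separation oracle in oracle-polynomial time. The argument runs in the direction ``optimization $\Rightarrow$ separation''. If linear objectives could be optimized exactly over $\CP(n)$ in polynomial time, then the strong separation problem for $\CP(n)$ would be solvable in polynomial time. We will show this would decide comb violation, which is NP-complete by Theorem~\ref{thm:main}. Since the resulting algorithm calls the optimization routine many times, the reduction is a polynomial-time Turing reduction, as the statement claims.

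\textbf{Step 1: $\CP(n)$ is well-described.} Each defining inequality of $\CP(n)$ is either a nonnegativity constraint, a degree equation, a subtour elimination constraint $x(\delta(S))\ge 2$, or a comb inequality~\eqref{eq:comb}. Every one of these has coefficients in $\{0,1,2\}$ on the edge variables, because an edge can lie in $\delta(H)$ and in at most one $\delta(T_i)$ owing to disjointness of the teeth. The right-hand side is at most $3s+1\le 3n$. Hence each inequality has encoding length $O(n^2\log n)$, so $\varphi=O(n^2\log n)$ bounds the facet complexity. Also, $\CP(n)\supseteq\STSP(n)$ is a nonempty polytope for $n\ge 3$. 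Thus $(\CP(n);\binom n2,\varphi)$ is a well-described polyhedron in the sense of~\cite{Grotschel1993}.

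\textbf{Step 2: the reduction.} Given an instance $x^*\in\SEP(n)\cap\{0,\tfrac12,1\}^{|E|}$ from Theorem~\ref{thm:main}, whose encoding length is polynomial, we run the strong separation algorithm for $\CP(n)$ obtained from the hypothetical optimization oracle via~\cite[Theorem~6.4.9]{Grotschel1993}. This algorithm either asserts $x^*\in\CP(n)$ or returns a violated valid inequality. Because $x^*\in\SEP(n)$, we have $x^*\in\CP(n)$ if and only if no comb inequality is violated. The answer therefore decides the NP-complete problem of Theorem~\ref{thm:main}, and every NP problem Turing-reduces to exact optimization over $\CP(n)$.

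The main subtlety lies in applying the theorem correctly. Its hypotheses require well-description only through facet complexity, and not a compact description. We also only need the membership answer from the separation routine, not that the returned cut be a comb. Finally, the reduction must go through strong separation, not weak separation, and this is available for well-described rational polyhedra. I expect Step~1's encoding bound to be routine. Careful bookkeeping is needed only to confirm that the oracle-polynomial running time is polynomial in $n$ and the encoding length of $x^*$.
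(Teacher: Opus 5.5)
Your proposal is correct and follows essentially the same route as the paper: show that $\CP(n)$ is well-described, then use the optimization-to-separation direction of the Gr\"otschel--Lov\'asz--Schrijver equivalence to decide membership for the points of Theorem~\ref{thm:main}. One small slip in Step~1: an edge joining two different teeth lies in both tooth cuts (e.g.\ $u\in T_i\cap H$, $v\in T_j\setminus H$), so comb coefficients range over $\{0,1,2,3\}$ rather than $\{0,1,2\}$; any constant bound (indeed the paper's $O(n)$ bound) suffices, so the well-describedness conclusion is unaffected.
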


\begin{proof}
The degree equations, variable bounds, and subtour elimination constraints all have small integral coefficients. For a comb inequality, each edge coefficient and right-hand side is bounded by $O(n)$. Since each inequality involves $O(n^2)$ variables, its binary encoding length is polynomial in $n$. Thus $\CP(n)$ is a well-described polyhedron.

By the optimization--separation equivalence~\cite{Grotschel1993}, an exact optimization oracle for $\CP(n)$ permits exact separation from $\CP(n)$, and hence membership testing for this polyhedron, in oracle-polynomial time. For $x^*\in\SEP(n)$, nonmembership in $\CP(n)$ is equivalent to the existence of a violated comb inequality. The oracle therefore solves the NP-complete decision problem of Theorem~\ref{thm:main}. Consequently, exact optimization over $\CP(n)$ is NP-hard under polynomial-time Turing reductions.
\end{proof}

It is tempting to infer hardness for any inequality family containing the comb inequalities. For separation, this inference requires care. Let $\cC$ be the family of comb inequalities and let $\cC \subseteq \mathcal{D}$, where $\mathcal{D}$ is a larger family. Family containment gives
\begin{equation}\label{eq:containment}
 \exists C\in\cC\text{ violated by }x^*
 \quad\Longrightarrow\quad
 \exists D\in\mathcal{D}\text{ violated by }x^*.
\end{equation}
The converse can fail; a point satisfying every comb inequality might violate an additional member of $\mathcal{D}$. In particular, a separator for $\mathcal{D}$ is allowed to return such a member without answering whether a violated comb inequality exists. A many-one reduction must preserve no instances as well as yes instances.

For the present construction to establish hardness of $\mathcal{D}$-separation by the same map, one would additionally need to prove for a 3-SAT formula $\Phi$ that
\begin{equation}\label{eq:transfer-condition}
 \Phi\text{ unsatisfiable}
 \quad\Longrightarrow\quad
 x^*(\Phi)\text{ satisfies every inequality in }\mathcal{D}.
\end{equation}
Together with containment and the reduction presented herein, this condition would give the required equivalence. We have proved it for combs, but make no such claim for larger families such as domino-parity, clique-tree and path inequalities.

\section{Proof details}\label{sec:proof}

\subsection{Structure of violated combs}\label{subsec:equality}

Before constructing the reduction, we describe the structure of any violated comb at a half-integral subtour-feasible point. The slack calculations presented here are closely related to those presented in the work of Letchford~\cite{Letchford2000}. 

\begin{definition}\label{def:domino}
Following the definition of Letchford~\cite{Letchford2000}, a \emph{domino} is a pair $(A,B)$ with the following properties:
\begin{itemize}
    \item $\varnothing \neq A,B \subset V$,
    \item $A \cap B = \varnothing$,
    \item $A \cup B \neq V$.
\end{itemize}
\end{definition}

As noted earlier, the slack of Eq.~\eqref{eq:SECs_restated} is twice the slack of Eq.~\eqref{eq:TSP_ILP:line3}. Therefore, for a nonempty proper subset $S$, denote the slack of its subtour elimination constraint by
\begin{equation}\label{eq:excess}
 e_x(S)=\frac{x(\delta(S))-2}{2}.
\end{equation}
For a domino $(A,B)$, define its weight as
\begin{equation}\label{eq:domino-weight}
 w(A,B)=e_x(A)+e_x(B)+e_x(A\cup B).
\end{equation}
Recall that $x(A,B)=x(E(A,B))$. For a comb, let $A_i=T_i\cap H$, $B_i=T_i\setminus H$, and define the residual as the difference between the weight of the handle cut and the total weight of edges joining the two parts of each tooth:
\begin{equation}\label{eq:residual}
 r_x=x(\delta(H))-\sum_{i=1}^{s}x(A_i,B_i).
\end{equation}
Furthermore, note that
\begin{equation}\label{eq:cut-sum}
 x(\delta(A))+x(\delta(B))
 =x(\delta(A\cup B))+2x(A,B).
\end{equation}
Substituting Eq.~\eqref{eq:excess} in Eq.~\eqref{eq:domino-weight} and then using Eq.~\eqref{eq:cut-sum} gives $w(A,B)=x(\delta(A\cup B))+x(A,B)-3$. Rewriting the tooth terms in Eq.~\eqref{eq:slack} using this identity and substituting the expression for $x(\delta(H))$ from Eq.~\eqref{eq:residual} gives
\begin{equation}\label{eq:slack-identity}
    \sigma_x=r_x+\sum_{i=1}^{s}w(A_i,B_i)-1.
\end{equation}

\begin{lemma}\label{lem:equality}
If $x^*\in\SEP(n)$ is half-integral and a comb is violated, then its slack is $-1$, its residual is zero, and $A_i$, $B_i$, and $T_i$ are tight for every tooth.
\end{lemma}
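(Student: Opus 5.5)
Starting from the slack identity \eqref{eq:slack-identity}, $\sigma_{x^*}=r_{x^*}+\sum_i w(A_i,B_i)-1$, the plan is to show that every term on the right is a nonnegative integer (or at least has values in a discrete set with minimum $0$). Then $\sigma_{x^*}<0$ forces each term to vanish and the slack to equal exactly $-1$.

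First, integrality. Since $x^*$ is half-integral, $x^*(\delta(H))$ and each $x^*(\delta(T_i))$ lie in $\tfrac12\mathbb{Z}$. A parity observation upgrades this. Summing the degree equations over $S$ gives \eqref{eq:degree-identity}, $x^*(\delta(S))=2|S|-2x^*(E(S))$, and $2x^*(E(S))$ is an integer, so $x^*(\delta(S))$ is an integer for every $S$. Hence $\sigma_{x^*}$ is an integer, and a violated comb has $\sigma_{x^*}\le -1$. Likewise each $e_{x^*}(S)\in\tfrac12\mathbb{Z}_{\ge0}$, nonnegative because $x^*\in\SEP(n)$ gives \eqref{eq:SECs_restated}. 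Here $A_i,B_i$ are nonempty by the comb conditions and $A_i\cup B_i=T_i\ne V$; the last holds because there are at least three disjoint teeth. So $(A_i,B_i)$ is a domino and $w(A_i,B_i)\ge0$.

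Second, the residual. I will show $r_{x^*}\ge0$ directly. Each edge of $E(A_i,B_i)$ joins $A_i\subseteq H$ to $B_i\subseteq V\setminus H$, so it lies in $\delta(H)$. The sets $E(A_i,B_i)$ are pairwise disjoint because the teeth are disjoint. Therefore $\sum_i x^*(A_i,B_i)\le x^*(\delta(H))$, that is, $r_{x^*}\ge0$.

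Third, conclude. From $\sigma_{x^*}\le -1$ and the identity, $r_{x^*}+\sum_i w(A_i,B_i)\le 0$, and all terms are nonnegative. Hence $r_{x^*}=0$ and every $w(A_i,B_i)=0$, so $\sigma_{x^*}=-1$. Since $w(A_i,B_i)$ is a sum of three nonnegative excesses, each of $e_{x^*}(A_i)$, $e_{x^*}(B_i)$ and $e_{x^*}(T_i)$ is zero, which means $A_i$, $B_i$ and $T_i$ are tight.

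The only delicate point is the integrality of $\sigma_{x^*}$: half-integrality alone gives only $\sigma_{x^*}\in\tfrac12\mathbb{Z}$. A slack of $-\tfrac12$ would then not force the terms to vanish. The degree-identity parity argument closes this gap, and I would state it explicitly. Everything else is bookkeeping with the identity already derived before the lemma.
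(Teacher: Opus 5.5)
Your proposal is correct and follows the paper's proof step for step. Subtour feasibility gives nonnegative domino weights, the disjoint sets $E(A_i,B_i)\subseteq\delta(H)$ give $r_{x^*}\ge 0$, the degree identity \eqref{eq:degree-identity} makes every cut weight (and hence $\sigma_{x^*}$) an integer, and the slack identity \eqref{eq:slack-identity} then forces $\sigma_{x^*}=-1$ with all terms vanishing. Your explicit check that $T_i\neq V$, which is what makes $e_{x^*}(T_i)\ge 0$ legitimate, is a small detail the paper leaves implicit.
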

\begin{proof}
If $x^*\in\SEP(n)$, then all slacks in Eq.~\eqref{eq:domino-weight} are nonnegative. Moreover, the edge sets $E(A_i,B_i)$ are disjoint subsets of $\delta(H)$, so nonnegativity constraints imply $r_{x^*}\geq0$. Eq.~\eqref{eq:slack-identity} therefore gives $\sigma_{x^*}\geq-1$. By Eq.~\eqref{eq:degree-identity}, every cut weight is integral when $x^*$ is half-integral. Consequently, Eq.~\eqref{eq:slack} implies that $\sigma_{x^*}$ is an integer. Thus, a violated comb has a negative slack that must be equal to $-1$. When $\sigma_{x^*}$ is negative, equality in Eq.~\eqref{eq:slack-identity} forces $r_{x^*}=0$ and every $w(A_i,B_i)=0$. Hence, each of the three nonnegative slacks defining $w$ must vanish. In other words, $A_i$, $B_i$, and $T_i$ are tight for every $i$.
\end{proof}

\begin{definition}\label{def:tight-domino}
A \emph{tight domino}~\cite{applegate2007} is a domino for which $x(\delta(A)) = x(\delta(B)) = x(\delta(A \cup B)) = 2$. Its \emph{semicut} consists of the positive edges in $E(A,B)$~\cite{Fleischer1999}.
\end{definition}

For the considered points, Lemma~\ref{lem:equality} says that the two halves of every tooth of a violated comb form a tight domino. Consequently, Eq.~\eqref{eq:cut-sum} shows that every semicut in a violated comb has weight one. Lemma~\ref{lem:equality} tells more than the tightness of individual teeth for violated combs: the handle-cut edges must be exactly the union of the semicuts. Indeed, $r_{x^*}$ is the total weight of all positive handle-cut edges outside that union and it must be equal to zero. Conversely, an odd collection of at least three disjoint tight dominos, equipped with a handle whose cut is the union of their semicuts, has slack $-1$ by Eq.~\eqref{eq:slack-identity}.

\subsection{A six-vertex ladder encoding a Boolean relation}\label{subsec:ladder}

\subsubsection{The ladder and its tight sets}

We now introduce the weighted ladder graph illustrated in Fig.~\ref{fig:ladder}, with vertex set
\begin{equation}
 U=\{a_1,a_2,a_3,b_1,b_2,b_3\}.
\end{equation}
This will be one of the main building blocks of the reduction and will be used to encode Boolean relations of the form $R(a,b;c,d):a+b=c+d$. These relations will be derived from the 3-SAT formula. The three edges $a_i b_i$ have weight one and are called rungs. The four edges $a_1a_2$, $a_2a_3$, $b_1b_2$, and $b_2b_3$ have weight one half and are called rails. Each corner $a_1,a_3,b_1,b_3$ receives one external edge of weight one half. These corners are the four \emph{ports}. All six vertices then have weighted degree two, and $x^*(\delta(U))=2$.

\begin{figure}[htbp]
\centering
\begin{tikzpicture}[scale=1.05,
 vertex/.style={circle,draw,fill=white,inner sep=2.4pt},
 half/.style={line width=.6pt},
 unit/.style={line width=1.7pt}]
 \foreach \i/\xx in {1/0,2/2,3/4}{
  \node[vertex,label=above:$a_\i$] (a\i) at (\xx,1.6) {};
  \node[vertex,label=below:$b_\i$] (b\i) at (\xx,0) {};
  \draw[unit] (a\i)--node[right,font=\small] {$1$} (b\i);
 }
 \foreach \i/\j in {1/2,2/3}{
  \draw[half] (a\i)--node[above,font=\small] {$1/2$} (a\j);
  \draw[half] (b\i)--node[below,font=\small] {$1/2$} (b\j);
 }
 \draw[half,dashed] (a1)--node[above,font=\small] {$1/2$}
  (-1.2,1.6);
 \draw[half,dashed] (b1)--node[below,font=\small] {$1/2$}
  (-1.2,0);
 \draw[half,dashed] (a3)--node[above,font=\small] {$1/2$}
  (5.2,1.6);
 \draw[half,dashed] (b3)--node[below,font=\small] {$1/2$}
  (5.2,0);
\end{tikzpicture}
\caption{The six-vertex ladder. Each rung $a_i b_i$ has weight one, and each rail edge has weight $1/2$. The four ports $a_1,a_3,b_1,b_3$ each have an external edge of weight $1/2$, shown dashed. In the rooted version, the four external endpoints are identified as a single vertex.}
\label{fig:ladder}
\end{figure}

We turn to a study of local cuts for a moment. Temporarily attach all four external edges to one root $r$. Let $L^\circ$ denote this seven-vertex weighted graph. For $S\subseteq U$, let $k(S)$ count the split rungs, meaning rungs with exactly one endpoint in $S$. Let $\rho_a(S)$ and $\rho_b(S)$ count the occupied intervals on the two three-vertex rails. An occupied interval is a maximal consecutive nonempty sequence of vertices belonging to $S$ along that rail. For example, $\rho_a(S) = 1$ for $S=\{a_1\}$, $S=\{a_1, a_2\}$ and $S=\{a_1, a_2, a_3\}$ while $\rho_a(S) = 2$ for $S=\{a_1, a_3\}$.

\begin{lemma}\label{lem:ladder-tight}
The tight sets contained in $U$ are precisely the six singletons and the six sets $P_\ell\cup\cdots\cup P_j$ with $1\leq\ell\leq j\leq3$ where $P_i=\{a_i,b_i\}$. The vector $x^*$ represented by the rooted graph $L^\circ$ belongs to $\SEP(7)$.
\end{lemma}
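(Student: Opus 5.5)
The plan is to derive a closed formula for $x^*(\delta(S))$, valid for every nonempty $S\subseteq U$ in $L^\circ$, in terms of $k(S)$, $\rho_a(S)$, $\rho_b(S)$ and the number of ports in $S$. Then read off which sets attain the value $2$, and finally handle sets containing the root by complementation.

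First the counting formula. Every edge of $\delta(S)$ is a rung, a rail edge, or an external edge to $r$. Each split rung contributes $1$. On the $a$-rail, a nonempty occupied pattern has $\rho_a(S)$ intervals. Each interval has two "ends". An end either abuts a rail vertex not in $S$, giving a cut rail edge of weight $\tfrac12$, or sits at a corner $a_1$ or $a_3$, where the external edge to $r$ is cut with weight $\tfrac12$. So the $a$-side contributes exactly $\tfrac12\cdot 2\rho_a(S)=\rho_a(S)$, and likewise for $b$. Hence
\begin{equation*}
x^*(\delta(S))=k(S)+\rho_a(S)+\rho_b(S)\qquad(\varnothing\ne S\subseteq U).
\end{equation*}

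Next, classify the sets attaining $2$.
\begin{itemize}
\item If $S$ meets only one rail, say the $a$-rail, then every vertex of $S$ lies on a split rung. So $k(S)=|S|$, and $|S|+\rho_a(S)\ge 2$ with equality iff $S$ is a singleton.
\item If $S$ meets both rails, then $\rho_a(S),\rho_b(S)\ge1$, so tightness forces $k(S)=0$ and $\rho_a(S)=\rho_b(S)=1$. Since $k(S)=0$, $S$ is a union of pairs $P_i$. Since $\rho_a(S)=1$, these pairs are consecutive, giving exactly the six sets $P_\ell\cup\cdots\cup P_j$.
\end{itemize}
Conversely, the formula shows that all twelve listed sets have cut weight $2$. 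This proves the first claim.

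For membership in $\SEP(7)$:
\begin{itemize}
\item Degree equations: these hold by construction. Each $a_i,b_i$ has weight $2$ as noted, and $r$ receives four half-edges.
\item Nonnegativity: this is clear.
\item Subtour elimination, in the form~\eqref{eq:SECs_restated}: every nonempty proper $S\subset U\cup\{r\}$ either avoids $r$ or has a complement that avoids $r$ and is nonempty. Since $x^*(\delta(S))=x^*(\delta(\bar S))$, it suffices to check sets $S\subseteq U$. The formula gives $k+\rho_a+\rho_b\ge 2$ in every case: either both $\rho$'s are at least $1$, or one rail is untouched and then $k=|S|\ge1$ with $\rho\ge1$. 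The set $S=U$ itself gives $0+1+1=2$, matching $x^*(\delta(U))=2$.
\end{itemize}

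The main obstacle is the careful justification of the "each interval contributes exactly $1$" step. An interval touching a corner loses a rail edge but gains a port edge, and an interval covering the whole rail uses both port edges. I would verify this by listing the seven nonempty patterns on a three-vertex rail, which is a quick finite check.
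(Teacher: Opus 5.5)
Your proposal is correct and follows essentially the same route as the paper: the identity $x^*(\delta(S))=k(S)+\rho_a(S)+\rho_b(S)$, a short case analysis of when it equals $2$, and complementation for sets containing the root. Your case split (whether $S$ meets one rail or both) differs from the paper's ($k(S)=0$ versus $k(S)\geq 1$) but is equivalent bookkeeping, and your opening mention of ``the number of ports in $S$'' is unnecessary, as your own formula shows.
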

\begin{proof}
First note that for every $S\subseteq U$,
\begin{equation}\label{eq:ladder-cut}
 x^*(\delta(S))=k(S)+\rho_a(S)+\rho_b(S).
\end{equation}
Indeed, each occupied rail interval contributes two boundary half-weight edges, counting an edge to the root when the interval meets an end. Its contribution is therefore one. Each split rung contributes one, proving Eq.~\eqref{eq:ladder-cut}.

For nonempty $S$, if $k(S)=0$, then $S$ is a union of whole rungs. Both rails are occupied, and the cut weight is at least two. We have $x^*(\delta(S))=2$ when the selected rungs are consecutive. Thus, the sets $P_\ell\cup\cdots\cup P_j$ with $1\leq\ell\leq j\leq3$ are tight. If $k(S)\geq1$, at least one rail interval is present, again giving cut weight at least two. Having $x^*(\delta(S))=2$ requires $k=1$ and $\rho_a+\rho_b=1$. In this case, only one rail is occupied, and only one rung is split, so $S$ is a singleton.

Showing that the vector represented by $L^\circ$ belongs to $\SEP(7)$ requires showing it satisfies subtour elimination constraints and degree constraints. Nonempty proper subsets containing $r$ have the same cut weight as their complement which is always greater or equal to two as we just showed. Finally, the root has weighted degree two, as do all six other vertices.
\end{proof}

The tight-set classification immediately gives the complete list of internal tight dominos:
\begin{equation}\label{eq:local-teeth}
\begin{split}
 &(\{a_i\},\{b_i\})\quad(i=1,2,3),\\
 &(P_1,P_2),\qquad (P_2,P_3),\\
 &(P_1,(P_2\cup P_3)),\qquad((P_1\cup P_2),P_3).
\end{split}
\end{equation}
To see that the list is exhaustive, a tight singleton can join another disjoint tight set to form a tight union only when the union is its whole rung. If both halves are unions of whole rungs, they must be adjacent.

\subsubsection{The boundary relation and tooth parity}

Represent a handle $H$ by the function $h:V\to\{0,1\}$, where $h(v)=1$ if $v\in H$ and $h(v)=0$ otherwise. We refer to these two values as the \emph{colors} of the vertices. An edge belongs to $\delta(H)$ exactly when its endpoints have different colors. Consider a ladder with a collection of disjoint internal tight dominos. Let its internal handle cut be the union of their semicuts and assume that every tooth is split according to $h$. See Fig.~\ref{fig:ladder-configurations} (b) and (c) for examples.

\begin{lemma}\label{lem:relation}
Such a configuration exists for prescribed port colors if and only if the boundary condition
\begin{equation}\label{eq:boundary}
 h(a_1)+h(b_3)=h(a_3)+h(b_1)
\end{equation}
is satisfied. In every such configuration, the number $s_U$ of teeth satisfies
\begin{equation}\label{eq:parity}
 s_U\equiv h(a_1)+h(b_3)\pmod2.
\end{equation}
Every feasible port assignment has a realization using only teeth of size two or four.
\end{lemma}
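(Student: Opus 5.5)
The plan is to turn the configuration condition into a statement about which internal edges of the ladder lie in $\delta(H)$, enumerate the few possibilities, and read off the port colors and tooth counts directly.

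\emph{Step 1: reduce to choices of cut edges.} By the list \eqref{eq:local-teeth}, the semicut of a tight domino is one of three kinds. For $(\{a_i\},\{b_i\})$ it is the rung $a_ib_i$. For $(P_1,P_2)$ and $(P_1,P_2\cup P_3)$ it is the rail pair $\{a_1a_2,b_1b_2\}$. For $(P_2,P_3)$ and $((P_1\cup P_2),P_3)$ it is $\{a_2a_3,b_2b_3\}$. The teeth must be disjoint. Hence at most one rail pair is used, a rail pair cannot be combined with a rung of a $P_i$ it covers, and rungs may be combined freely. The internal cut $\delta(H)\cap E(U)$ must equal the union of the chosen semicuts. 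Since $h$ is a $2$-coloring, this cut meets every cycle of the ladder in an even number of edges. Conversely, any edge set with this property determines $h$ on $U$ up to a global swap, because the ladder is connected. Once the cut is fixed, every tooth is automatically split by $h$, since its semicut edges are cut.

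\emph{Step 2: apply the parity condition to the two squares.} Look at the squares $a_1a_2b_2b_1$ and $a_2a_3b_3b_2$.
\begin{itemize}
\item With no rail pair, the rungs $1,2$ must be cut together and so must $2,3$. So either no rung or all three rungs are cut.
\item With the rail pair $\{a_1a_2,b_1b_2\}$, the first square already meets the cut twice, so rungs $1$ and $2$ are uncut. The second square then forces rung $3$ to be uncut as well.
\item The case of the rail pair $\{a_2a_3,b_2b_3\}$ is symmetric.
\end{itemize}
This leaves four cut patterns, each considered for both values of a global color $c$:
\begin{enumerate}
\item empty cut: all ports have color $c$, and $s_U=0$;
\item all three rungs: $h(a_1)=h(a_3)=c$ and $h(b_1)=h(b_3)=1-c$, with $s_U=3$;
\item rails $12$: $h(a_1)=h(b_1)=c$ and $h(a_3)=h(b_3)=1-c$, with $s_U=1$;
\item rails $23$: $h(a_1)=h(b_1)=c$ and $h(a_3)=h(b_3)=1-c$, with $s_U=1$.
\end{enumerate}
In each case \eqref{eq:boundary} is checked directly. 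For example, in case 2 we get $c+(1-c)=c+(1-c)$, and in case 1 we get $2c=2c$. In each case $s_U\equiv h(a_1)+h(b_3)\pmod 2$ as well: $0\equiv 2c$, $3\equiv c+1-c$, and $1\equiv c+1-c$. This proves necessity and \eqref{eq:parity}.

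\emph{Step 3: sufficiency and tooth sizes.} I would count the $0/1$ solutions of $h(a_1)+h(b_3)=h(a_3)+h(b_1)$. There are exactly six: both monochromatic assignments, $a$-ports equal with $b$-ports the opposite color, and left ports equal with right ports the opposite color. Each of these is realized by one of the patterns in Step 2. We use rung teeth, of size two, or the tooth $(P_1,P_2)$, of size four. The larger dominos $(P_1,P_2\cup P_3)$ and $((P_1\cup P_2),P_3)$ have the same semicuts, so they are never needed. This gives the final claim, and each such tooth has half of its vertices in $H$.

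The main obstacle is Step 1. One must justify carefully that ``internal handle cut equals the union of semicuts'' reduces exactly to a parity condition on the two squares. This uses the fact that the port edges leave $U$, so they impose no constraint here. One must also check that disjointness rules out combining the two rail pairs. After that, the rest is a finite check.
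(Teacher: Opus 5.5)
Your proof is correct and follows essentially the paper's route. You use the same dichotomy: either only rung teeth, or exactly one multi-rung tooth (at most one, since all four multi-rung dominos contain $P_2$). Your even-intersection condition on the two squares plays the role of the paper's propagation of colors along uncut rails and rungs, and you finish with the same matching of the six Boolean solutions of the boundary condition to the resulting patterns.

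One small slip: in the rail-pair case, parity on the first square does not by itself force rungs $1$ and $2$ to be uncut, since both cut is also even. That conclusion comes from disjointness, as you noted in Step~1: a rung tooth on $P_1$ or $P_2$ would overlap the multi-rung tooth.
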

\begin{proof}
First suppose the ladder contains only rung teeth, which are tight dominos of the form $T=(\{a_i\},\{b_i\})$ (see Fig.~\ref{fig:ladder-configurations} (b)). No rail edge belongs to their semicuts, so both rails must be uncut. Each rail consequently has a constant color. If the rail colors agree, then all vertices have the same color and there's no internal tooth being cut by the handle. If the rail colors differ, all three rungs are cut and all three rung teeth must intersect with the handle.

If the ladder does not contain only rung teeth, at least one tooth uses at least two full rungs\footnote{The two halves of every internal tooth considered in Lemma~\ref{lem:relation} form a tight domino. The complete list in Eq.~\eqref{eq:local-teeth} consists of rung teeth and teeth containing at least two full rungs. Thus, if the teeth in the ladder are not all rung teeth, one of them must contain at least two full rungs.} (see Fig.~\ref{fig:ladder-configurations} (c)). The list in Eq.~\eqref{eq:local-teeth} shows that its semicut consists of the two rail edges at one boundary between consecutive rungs. There can be at most one such tooth because any two overlap. The rungs within it are uncut, and the rail edges at the other boundary are uncut. If the tooth leaves one rung available, those uncut rails force the two endpoints of that rung to have the same color. Thus there is exactly one tooth, all rungs are uncut, and the two end rungs have opposite colors.

These cases produce the six port colorings in Table~\ref{tab:states}, three of which are presented in Fig.~\ref{fig:ladder-configurations}. They are exactly the Boolean solutions of Eq.~\eqref{eq:boundary}, and their associated internal tooth counts have the parity in Eq.~\eqref{eq:parity}. Uniform colors realize the two uniform vertex assignments to the handle, opposite rail colors realize two assignments using three rung teeth, and opposite end-rung colors realize the last two assignments. In this last case, only the four-vertex tooth $P_1\mid P_2$ is needed if we give rungs two and three the same color. The other four-vertex and six-vertex teeth could also be used without changing the parity of Eq.~\eqref{eq:parity}.
\end{proof}

\begin{figure}[t]
\centering
\definecolor{toothhalfcolor}{HTML}{3568AD}
\def\semicutwidth{5.5pt}

\begin{tikzpicture}[scale=1.05,
 vertex/.style={circle,draw,fill=white,inner sep=2.4pt},
 half/.style={line width=.6pt},
 unit/.style={line width=1.7pt},
 tooth/.style={draw=black,line width=.5pt,rounded corners=7pt},
 toothhalf/.style={draw=toothhalfcolor,line width=.6pt,
                   rounded corners=6pt},
 semicut/.style={draw=black!22,line width=\semicutwidth,
                 line cap=round}]

 \foreach \panel/\shift/\title in
  {0/0/{(a) Uniform colors},
   1/4.35/{(b) Opposite rail colors},
   2/8.70/{(c) Opposite end-rung colors}}{

  \begin{scope}[shift={(\shift,0)}]

   \foreach \i/\xx in {1/0,2/1.25,3/2.50}{
    \coordinate (a\i) at (\xx,1.6);
    \coordinate (b\i) at (\xx,0);
   }

   \ifnum\panel=1
    \foreach \i in {1,2,3}
     \draw[semicut] (a\i)--(b\i);
   \fi

   \ifnum\panel=2
    \draw[semicut] (a1)--(a2);
    \draw[semicut] (b1)--(b2);
   \fi

   \ifnum\panel=1
    \foreach \i/\xx in {1/0,2/1.25,3/2.50}{
     \draw[tooth] ({\xx-.38},-.82)
      rectangle ({\xx+.38},2.27);

     \draw[toothhalf] ({\xx-.30},.86)
      rectangle ({\xx+.30},2.18);

     \draw[toothhalf] ({\xx-.30},-.72)
      rectangle ({\xx+.30},.74);
    }
   \fi

   \ifnum\panel=2
    \draw[tooth] (-.38,-.82) rectangle (1.63,2.27);
    \draw[toothhalf] (-.30,-.72) rectangle (.595,2.18);
    \draw[toothhalf] (.655,-.72) rectangle (1.55,2.18);
   \fi

   \foreach \i in {1,2,3}
    \draw[unit] (a\i)--(b\i);

   \foreach \i/\j in {1/2,2/3}{
    \draw[half] (a\i)--(a\j);
    \draw[half] (b\i)--(b\j);
   }

   \draw[half,dashed] (a1)--(-.63,1.6);
   \draw[half,dashed] (b1)--(-.63,0);
   \draw[half,dashed] (a3)--(3.13,1.6);
   \draw[half,dashed] (b3)--(3.13,0);

   \foreach \i in {1,2,3}{
    \ifnum\panel=0
     \node[vertex,label=above:$a_\i$] at (a\i) {};
     \node[vertex,label=below:$b_\i$] at (b\i) {};
    \fi

    \ifnum\panel=1
     \node[vertex,fill=black,label=above:$a_\i$]
      at (a\i) {};
     \node[vertex,label=below:$b_\i$] at (b\i) {};
    \fi

    \ifnum\panel=2
     \ifnum\i=1
      \node[vertex,fill=black,label=above:$a_\i$]
       at (a\i) {};
      \node[vertex,fill=black,label=below:$b_\i$]
       at (b\i) {};
     \else
      \node[vertex,label=above:$a_\i$] at (a\i) {};
      \node[vertex,label=below:$b_\i$] at (b\i) {};
     \fi
    \fi
   }

    \ifnum\panel=1
    \foreach \i/\xx in {1/0,2/1.25,3/2.50}{
    \node[anchor=west,inner sep=0pt,font=\small]
    at ({\xx+.40},2.31) {$T_\i$};

    \node[anchor=west,inner sep=0pt,font=\small,
            text=toothhalfcolor]
    at ({\xx+.40},1.17) {$A_\i$};

    \node[anchor=west,inner sep=0pt,font=\small,
            text=toothhalfcolor]
    at ({\xx+.40},.37) {$B_\i$};
    }
    \fi

   \ifnum\panel=2
    \node[anchor=west,font=\small] at (1.55,2.31) {$T_1$};
    \node[font=\small,text=toothhalfcolor]
     at (.31,.80) {$A_1$};
    \node[font=\small,text=toothhalfcolor]
     at (.92,.80) {$B_1$};
   \fi

   \node[font=\small] at (1.25,2.95) {\title};

   \ifnum\panel=0
    \node at (1.25,-1.15) {$s_U=0$};
    \node[font=\small] at (1.25,-1.67)
     {$(a,b;c,d)=(0,0;0,0)$};
   \fi

   \ifnum\panel=1
    \node at (1.25,-1.15) {$s_U=3$};
    \node[font=\small] at (1.25,-1.67)
     {$(a,b;c,d)=(1,0;1,0)$};
   \fi

   \ifnum\panel=2
    \node at (1.25,-1.15) {$s_U=1$};
    \node[font=\small] at (1.25,-1.67)
     {$(a,b;c,d)=(1,0;0,1)$};
   \fi

  \end{scope}
 }

 \node[vertex,fill=black] at (.10,-2.57) {};
 \node[anchor=west,font=\small] at (.30,-2.57)
  {in $H$};

 \node[vertex] at (2.00,-2.57) {};
 \node[anchor=west,font=\small] at (2.20,-2.57)
  {outside $H$};

 \draw[semicut] (4.40,-2.57)--(5.00,-2.57);
 \draw[half] (4.40,-2.57)--(5.00,-2.57);
 \node[anchor=west,font=\small] at (5.15,-2.57)
  {semicut};

 \draw[tooth,rounded corners=1.5pt]
  (7.25,-2.75) rectangle (7.85,-2.39);
 \node[anchor=west,font=\small] at (8.00,-2.57)
  {tooth};

 \draw[toothhalf,rounded corners=1.5pt]
  (9.45,-2.75) rectangle (10.05,-2.39);
 \node[anchor=west,font=\small] at (10.20,-2.57)
  {tooth half};

\end{tikzpicture}
\caption{Representative ladder configurations realizing the six feasible port colorings. Black vertices belong to the handle and white vertices do not. Complementing all colors gives the other three realizations. The port order is $(a,b,c,d)=(h(a_1),h(b_3),h(a_3),h(b_1))$. Rounded outlines delimit teeth and their halves while grey bands mark their semicuts. Thick rungs have weight one while thin rails and dashed external edges have weight $1/2$. The internal handle cut is exactly the union of the semicuts. In every panel, $a+b=c+d$ and $s_U\equiv a+b\pmod 2$.}
\label{fig:ladder-configurations}
\end{figure}

\begin{table}[t]
\centering
\begin{tabular}{cccccl}
\toprule
$a$ & $b$ & $c$ & $d$ & $s_U$ & Chosen realization\\
\midrule
0&0&0&0&0&All vertices outside the handle\\
1&1&1&1&0&All vertices inside the handle\\
1&0&1&0&3&Top rail inside, bottom rail outside\\
0&1&0&1&3&Top rail outside, bottom rail inside\\
1&0&0&1&1&Rung one inside, rungs two and three outside\\
0&1&1&0&1&Rung one outside, rungs two and three inside\\
\bottomrule
\end{tabular}
\caption{All feasible assignments to $(a,b,c,d)=(h(a_1),h(b_3),h(a_3),h(b_1))$ and the respective number of teeth.}
\label{tab:states}
\end{table}

To express the ladder constraint as a Boolean relation, write $a=h(a_1)$, $b=h(b_3)$, $c=h(a_3)$, and $d=h(b_1)$. These symbols denote the port colors. The boundary condition then becomes $a+b=c+d$, which we denote by $R(a,b;c,d)$.

Addition in this relation is ordinary integer addition. By Lemma~\ref{lem:relation}, its associated tooth parity is $a\xor b$, where $\xor$ denotes addition modulo two. Eq.~\eqref{eq:boundary} and Eq.~\eqref{eq:parity} are complementary: the relation excludes $(1,1;0,0)$ and $(0,0;1,1)$ even though both satisfy the corresponding equation modulo two.

\subsubsection{The rooted ladder}

The following fact will later exclude teeth that extend across several gadgets.

\begin{lemma}\label{lem:rooted-free}
The rooted graph $L^\circ$ has no violated comb inequality.
\end{lemma}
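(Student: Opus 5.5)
By Lemma~\ref{lem:rooted-free}'s hypotheses, the vector of $L^\circ$ is half-integral and lies in $\SEP(7)$ (Lemma~\ref{lem:ladder-tight}). So Lemma~\ref{lem:equality} applies to any violated comb. The plan is to assume a violated comb exists and derive a contradiction by exhausting the tight-domino structure. Lemma~\ref{lem:equality} yields the following facts:
\begin{itemize}
\item every tooth $(A_i,B_i)$ is a tight domino;
\item $\delta(H)$ consists exactly of the positive edges in the union of the semicuts;
\item each semicut has weight one, so $x^*(\delta(H))=s$.
\end{itemize}

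\textbf{Step 1: reduce to three two-vertex teeth or a root tooth.} Replacing $H$ by $V\setminus H$ leaves $\delta(H)$ unchanged. It also preserves the tooth conditions $T_i\cap H\ne\varnothing\ne T_i\setminus H$ (the two conditions swap). So we may assume $r\notin H$. Since teeth are disjoint, each has at least two vertices, and $|V|=7$, we get $s=3$. At most one tooth contains $r$.

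The tight sets of $L^\circ$ are the tight subsets of $U$ from Lemma~\ref{lem:ladder-tight} together with their complements, which are the only tight sets containing $r$. This splits the argument into two cases.

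\textbf{Case (i): all three teeth lie in $U$.} The three teeth are internal tight dominos from the list in Eq.~\eqref{eq:local-teeth}. Any two dominos of the form $(P_\ell,\dots)$ overlap, and six vertices must host three disjoint teeth, so the teeth must be the three rung teeth. The semicuts then contain no rail edge, so each rail is monochromatic, and every rung is cut, so the two rails get different colors. Hence one rail lies entirely in $H$. Its two external edges to $r\notin H$ are then positive edges of $\delta(H)$ that lie in no semicut. This gives $r_{x^*}>0$, contradicting Lemma~\ref{lem:equality}.

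\textbf{Case (ii): one tooth $T_1$ contains $r$.} The other two teeth are disjoint internal tight dominos from Eq.~\eqref{eq:local-teeth}, with total size at most $5$ inside $U$. So they are two rung teeth, or a rung tooth and a four-vertex tooth $(P_i,P_j)$.
\begin{itemize}
\item Two four-vertex teeth are impossible by size.
\item A rung tooth together with a four-vertex tooth must avoid each other, for example $(\{a_3\},\{b_3\})$ with $(P_1,P_2)$. This leaves $T_1\subseteq\{r\}$, which is too small to be a tooth.
\item Hence the two other teeth are rung teeth $P_i,P_j$, and $T_1\subseteq\{r\}\cup P_k$ for the remaining rung $k$.
\end{itemize}
Now $T_1$ must be tight, and its halves must be tight and disjoint. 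Since $T_1\ni r$, the set $V\setminus T_1\supseteq P_i\cup P_j$ is tight, so $T_1$ is either $\{r\}\cup P_k$ or $\{r\}\cup\{\text{singleton}\}$. In either case the half $A_1$ or $B_1$ containing $r$ must have a tight complement in $U$, which is a union of consecutive rungs or a singleton.

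Checking these finitely many possibilities with the cut formula \eqref{eq:ladder-cut} should leave only a few candidate tight dominos. For instance, $T_1=\{r\}\cup P_1$ is the complement of the tight set $P_2\cup P_3$, which requires $k=1$. Its halves $\{r\}$ and $P_1$ are both tight.

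\textbf{Closing case (ii).} For each survivor, I will verify directly that the handle condition fails. The colors forced by the two rung teeth make both rails on rungs $i,j$ split, so both rungs $i,j$ are cut. The requirement $\delta(H)=$ union of semicuts, with $r\notin H$, then forces the rail edges between rung $k$ and rungs $i,j$, or the external edges at the ports, to be cut outside any semicut. Alternatively, the handle cut weight $x^*(\delta(H))=k(H)+\rho_a(H)+\rho_b(H)$ comes out different from $3$. Either way this contradicts Lemma~\ref{lem:equality}.

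\textbf{Main obstacle.} I expect Case (ii) to be the hardest step, since the tooth through $r$ escapes the internal list \eqref{eq:local-teeth}. My plan there is to enumerate via complements of tight sets of $U$ and then use the colour argument of Lemma~\ref{lem:relation}. Specifically, in the $T_1=\{r\}\cup P_1$ example, rails uncut within $P_1$ force $a_1,b_1$ to share a color. This conflicts with the semicut of $(\{r\},P_1)$ being exactly the two port edges at $a_1,b_1$ while the rung teeth at $P_2,P_3$ require opposite rail colours there.
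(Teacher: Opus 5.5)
Your proposal is correct and follows essentially the paper's argument. You get $s=3$, split on whether $r$ lies in a tooth, and use the tight-set classification to leave either three rung teeth or two rung teeth plus the domino $(\{r\},P_k)$ with $k$ an end rung; the contradiction then comes from zero residual, which the paper expresses as connectivity of the support after deleting the semicuts and you express through colour propagation. Your hedged closing of case (ii) is immediate: $r\notin H$ gives $H\subseteq U$ with two split rungs and both rails occupied, so $x^*(\delta(H))\geq4>3=s$ by Eq.~\eqref{eq:ladder-cut}.
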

\begin{proof}
Every tooth contains at least two vertices. A comb on seven vertices therefore has exactly three teeth, with sizes $(2,2,2)$ or $(2,2,3)$. If violated, the two halves of each tooth form a tight domino by Lemma~\ref{lem:equality}. A two-vertex tight domino requires its two vertices to be joined by a unit edge, by Eq.~\eqref{eq:cut-sum}. The only such edges are the three rungs.

In the first case, the teeth are the three rungs. Removing their semicuts leaves both rails joined through the root, so the remaining support graph is connected. Therefore, every nonempty proper handle $H$ cuts at least one remaining edge. Since every edge in the support graph has positive weight, this edge contributes positively to the residual, contradicting the zero-residual condition (Lemma~\ref{lem:equality}).

In the second case, two teeth are rungs and the last tooth consists of the root and the remaining rung. This triple is tight only when the remaining rung is an end rung: its complement consists of two consecutive whole rungs, as classified by Lemma~\ref{lem:ladder-tight}. The only tight split of the triple is the root versus that end rung. A pair consisting of the root and one endpoint has cut weight three and is not tight. Its semicut is thus the pair of external edges at that end. After deleting this semicut and the other two rung semicuts, the rails and the uncut end rung connect all six ladder vertices, and the root is still attached at the opposite end. The remaining support is again connected, giving the same contradiction.
\end{proof}

\subsection{From 3-SAT to a weighted graph}\label{subsec:construction}

The reduction proceeds in the following stages. We first encode the input 3-SAT formula using Boolean relations of the form $R(a,b;c,d)$. This relation is encoded using the ladder gadget introduced in Section~\ref{subsec:ladder}, with its four arguments corresponding to the colors of the four ports. Lemma~\ref{lem:relation} provides the connection: a port coloring admits the required local tooth configuration exactly when its four values satisfy $R$, and the parity of the number of selected teeth is determined by these values.

As such, Boolean values of this formula will be encoded through  handle membership in the resulting graph. Color one means that a vertex belongs to $H$, while color zero means that it lies outside $H$. Occurrences of the same logical variable will be connected through a variable gadget (a \textit{prism} graph) designed to enforce a common color for these occurrences in every violated comb. This consistency property, together with the necessary restrictions on tooth structure, will be proved in Section~\ref{subsec:correctness}. This realizes a correspondence between Boolean assignments satisfying the 3-SAT formula and handles equipped with compatible teeth of violated combs. The graph and its edge weights are constructed from the formula alone. No satisfying assignment is needed to build them.

\subsubsection{Encoding clauses}\label{subsubsec:encoding}

The source problem is 3-SAT, whose NP-completeness follows from the work of Cook~\cite{Cook71} and Karp~\cite{Karp1972}. Its input is a Boolean formula
\begin{equation}
 \Phi=\bigwedge_{j=1}^{m}(\ell_{j1}\lor\ell_{j2}\lor\ell_{j3}),
\end{equation}
where each literal $\ell_{jk}$ is a source variable or its negation. A satisfying assignment makes at least one literal true in each clause. We assume $v\geq1$ source variables, $m\geq1$ clauses and exactly three literals per clause.

Consider two logical reference variables $T$ and $F$, intended to represent truth and falsehood, respectively. In the graph constructed by the reduction, the value of these logical variables will be determined by the handle colors of their representing vertices, just as for the other logical variables. We will enforce $T\ne F$ through the constraint that combs must be composed of an odd number of teeth; this will be covered in Section~\ref{subsec:anchor_relation}. Once their values differ, complementing the handle if necessary allows us to take $T=1$ and $F=0$ (see Section~\ref{subsec:anchor_relation}). In this subsection, we describe the encoding of clauses. For each source variable of a 3-SAT formula, we introduce the logical variables $X$ and $\bar X$, and enforce consistency between these variables using a \textit{complement relation}:
\begin{equation}\label{eq:complement}
 R(X,\bar X;T,F).
\end{equation}
When $T=1$ and $F=0$, this relation becomes $X+\bar X=1$.

For a clause $j$ whose literal representatives are $a,b,c$, introduce three new logical variables $u_j,v_j,w_j$ and impose the \textit{clause relations}:
\begin{equation}\label{eq:clause-relations}
 R(a,b;u_j,v_j),\qquad R(u_j,c;T,w_j).
\end{equation}
We use $X$ to represent a positive literal and $\bar X$ to represent a negative one.

\begin{lemma}\label{lem:clause}
For $T=1$ and any $a,b,c\in\{0,1\}$, there exist ${u_j,v_j,w_j\in\{0,1\}}$ satisfying both relations in Eq.~\eqref{eq:clause-relations} if and only if $a\lor b\lor c$ is true.
\end{lemma}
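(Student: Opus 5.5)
The plan is to argue entirely over the integers. The relation $R(a,b;c,d)$ is ordinary integer equality $a+b=c+d$, and $T=1$ is fixed. So the two clause relations of Eq.~\eqref{eq:clause-relations} become
\begin{equation*}
 a+b=u_j+v_j,\qquad u_j+c=1+w_j,
\end{equation*}
with all unknowns in $\{0,1\}$. We eliminate $v_j$ and $w_j$ and obtain a condition on $u_j$ alone.

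The first equation has a solution $v_j\in\{0,1\}$ exactly when $a+b-u_j\in\{0,1\}$. This means $u_j\le a+b$ and $u_j\ge a+b-1$. Equivalently, $u_j=0$ is allowed iff $a+b\le1$, and $u_j=1$ is allowed iff $a+b\ge1$. The second equation has a solution $w_j=u_j+c-1\in\{0,1\}$ exactly when $u_j+c\ge1$. So solvability amounts to the existence of $u_j\in\{0,1\}$ meeting both requirements.

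For the "only if" direction, suppose $a=b=c=0$. The first equation then forces $u_j=v_j=0$. The second becomes $0=1+w_j$, which has no solution in $\{0,1\}$.

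For the "if" direction, assume $a\lor b\lor c$ holds and split into cases.
\begin{itemize}
 \item If $a+b\ge1$, take $u_j=1$ and $v_j=a+b-1\in\{0,1\}$. Then $u_j+c\ge1$, and $w_j=c$ works.
 \item If $a+b=0$, then $c=1$. Take $u_j=v_j=0$ and $w_j=0$, which gives $0+1=1+0$.
\end{itemize}
Both relations hold in each case.

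There is no real obstacle here; the argument is a finite case check. The only point needing care is to keep $R$ as integer addition rather than addition modulo two, because otherwise assignments such as $(1,1;0,0)$ would be wrongly admitted. As a safeguard, one could instead list all eight triples $(a,b,c)$ and exhibit witnesses directly.
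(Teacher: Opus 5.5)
Your proof is correct and essentially matches the paper's. Your casewise witnesses ($u_j=1$, $v_j=a+b-1$, $w_j=c$ when $a+b\ge1$, and $u_j=v_j=w_j=0$ otherwise) are exactly the paper's choice $u_j=a\lor b$, $v_j=a\land b$, $w_j=u_j+c-1$, while for necessity the paper sums the two equations to get $a+b+c=1+v_j+w_j\geq1$, which is the general form of your direct check that $a=b=c=0$ admits no solution.
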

\begin{proof}
The two equations are $a+b=u_j+v_j$ and $u_j+c=1+w_j$. First suppose these equations are satisfied. Substituting $u_j$ in the second equation gives
\begin{equation}\label{eq:clause-sum}
 a+b+c=1+v_j+w_j\geq1,
\end{equation}
so the clause is satisfied. Conversely, if the clause is true, choose
\begin{equation}\label{eq:clause-witness}
 u_j=a\lor b,\qquad v_j=a\land b,\qquad w_j=u_j+c-1.
\end{equation}
For Boolean $a,b$, their sum equals their disjunction plus their conjunction: ${a+b=(a \lor b) + (a \land b)}$. With the choice of $u_j$ and $v_j$ from Eq.~\eqref{eq:clause-witness}, this proves that the first relation is satisfied. Next, rewrite $w_j$ as $w_j=(u_j\lor c) + (u_j\land c) - 1$. Clause satisfaction gives $u_j\lor c=1$, so $w_j\in\{0,1\}$. Substituting the choice of $w_j$ from Eq.~\eqref{eq:clause-witness} in the second relation shows it will be satisfied.
\end{proof}

\subsubsection{The anchor relation}\label{subsec:anchor_relation}

We call the logical reference variables $T$ and $F$ the \emph{anchors}, because they provide a common reference for interpreting truth and falsehood throughout the construction. Their names do not prescribe their handle colors. Instead, we must ensure that they receive opposite colors. In this subsection, we show that we can complement the handle if needed to obtain $T=1$ and $F=0$.

To enforce $T\ne F$, take two copies of every relation in Eqs.~\eqref{eq:complement} and~\eqref{eq:clause-relations}, and add one copy of
\begin{equation}\label{eq:anchor}
 R(T,F;T,F).
\end{equation}
We call this additional relation the \emph{anchor relation}. As a Boolean constraint, it is the tautology $T+F=T+F$, so it places no restriction on the anchor values by itself. The purpose of this relation is to control the parity of the number of teeth through its associated ladder. By Lemma~\ref{lem:relation}, that ladder contributes a number of internal teeth whose parity is $T\xor F$.

Each duplicated relation has identical arguments in its two copies, so their tooth counts have the same parity and their combined contribution is even. Consequently, the parity of the total tooth count is determined by the anchor relation. Requiring this total to be odd forces $T\xor F=1$, and therefore $T\ne F$.

All relations are preserved by complementing every logical bit: $a+b=c+d$ implies $(1-a)+(1-b)=(1-c)+(1-d)$. Their parities are preserved as well. Once $T\ne F$ has been established, we may therefore normalize to $T=1,F=0$ by taking the complementary handle if necessary.

\subsubsection{Padding relations}

In this subsection, we introduce the machinery that will be needed to prove Lemma~\ref{lem:small-cuts} in Section~\ref{subsec:correctness}. This lemma will be essential to demonstrate that every tooth of every violated comb in the graph constructed by the reduction must be contained in a ladder.

The logical variables appearing in the relations introduced so far comprise two representatives $X,\bar X$ for each of the $v$ source variables, three auxiliary variables $u_j,v_j,w_j$ for each of the $m$ clauses and the two anchors $T,F$. The total number of variables occurring in the relations is therefore
\begin{equation}\label{eq:q}
 q=2v+3m+2.
\end{equation}
Listing them in any cyclic order $z_1,\ldots,z_q$, we then add the $q$ tautological relations
\begin{equation}\label{eq:padding}
 R(z_i,z_{i+1};z_i,z_{i+1})\quad(1\leq i\leq q),
 \qquad z_{q+1}=z_1.
\end{equation}
We call these relations the \textit{padding relations} and the associated ladders the \textit{padding ladders}. Each padding relation states $z_i+z_{i+1}=z_i+z_{i+1}$, so it holds for every Boolean assignment and does not change satisfiability. By Lemma~\ref{lem:relation}, its associated ladder contributes tooth parity $z_i\xor z_{i+1}$. Since $z_{q+1}=z_1$, each variable occurs twice in the XOR of these parities and therefore cancels:
\begin{equation}\label{eq:cycle-parity}
 \bigxor_{i=1}^{q}(z_i\xor z_{i+1})
 =\bigxor_{i=1}^{q}(z_i\xor z_i)=0.
\end{equation}
Here we use associativity and commutativity of XOR, together with $z\xor z=0$. Thus the padding ladders collectively contribute an even number of teeth and preserve the total tooth parity. Each variable appears in two adjacent padding relations and appears twice in each, adding four occurrences of this variable.

All relations needed for the reduction have now been introduced. Let's now count how many such relations there are. Eq.~\eqref{eq:complement} introduces one complement relation per source variable. Duplicating these gives $2v$ occurrences. Eq.~\eqref{eq:clause-relations} introduces two relations per clause. Duplicating both gives $4m$ occurrences. We then add the single anchor relation of Eq.~\eqref{eq:anchor} and the $q$ padding relations from Eq.~\eqref{eq:padding}. Thus, counting each copy separately, the total number of relation occurrences is
\begin{equation}\label{eq:M}
 M=2v+4m+1+q=4v+7m+3,
\end{equation}
Each relation occurrence will be represented by its own ladder with separate ladders for duplicated relations. We will use the following properties of the resulting relation system:
\begin{enumerate}[label=(\roman*)]
 \item It admits a satisfying assignment with odd total relation parity if and only if the 3-SAT formula $\Phi$ is satisfiable. Indeed, duplicated relations and padding relations contribute even parity, leaving total parity $T\xor F$. Odd parity therefore forces $T\ne F$, allowing normalization to $T=1,F=0$ by complementation of the handle if needed. Under this normalization, Eq.~\eqref{eq:complement} enforces $\bar X=1-X$. Tautological relations from Eq.~\eqref{eq:anchor} and Eq.~\eqref{eq:padding} are always satisfied and Lemma~\ref{lem:clause} guarantees satisfiability of $\Phi$ when both relations in Eq.~\eqref{eq:clause-relations} are satisfied. Conversely, if $\Phi$ is satisfied, we can set $T=1$ and apply the assignments from Eq.~\eqref{eq:clause-witness} to satisfy the relations of Eq.~\eqref{eq:clause-relations}. Eq.~\eqref{eq:complement} now enforces $F=1-T$, giving odd parity, and tautological relations are always satisfied.
 \item Every logical variable $z$ has an even number $N_z\geq6$ of occurrences in the relations. The duplicated non-padding relations give each variable a positive even count, and the padding adds four.
 \item No relation contains the same logical variable more than twice. The only repetitions caused by repeated literals occur in the first clause relation, which has two literal positions. The anchor and padding relations also use each of their two distinct arguments twice.
 \item The bipartite incidence graph between the logical variables and the non-padding relations is connected as illustrated in Fig.~\ref{fig:nonpadding-incidence}. This is the bipartite graph with one vertex for each logical variable, one vertex for each non-padding relation occurrence and an edge between a variable vertex and a relation vertex when the corresponding variable appears in the corresponding relation. Complement relations connect every literal representative to $T,F$. The second relation of each clause connects $u_j,w_j$ to $T$, and the first connects $v_j$ to $u_j$ and the literal representatives.
\end{enumerate}

\begin{figure}[t]
\centering
\begin{tikzpicture}[scale=1.05,
 vertex/.style={circle,draw,fill=white,inner sep=2.4pt},
 relation/.style={rectangle,draw,fill=white,inner sep=2.4pt},
 edge/.style={line width=.6pt},
 group/.style={font=\small\itshape}]

 \node[anchor=east,font=\small] at (-.30,9.50)
  {Logical variables};
 \node[anchor=west,font=\small] at (5.35,9.50)
  {Non-padding relations};

 \node[group,anchor=east] at (-.30,8.70)
  {Anchors};
 \node[group,anchor=east] at (-.30,6.75)
  {Literal representatives};
 \node[group,anchor=east] at (-.30,2.35)
  {Auxiliary variables};

 \coordinate (T) at (0,8.05);
 \coordinate (F) at (0,7.45);

 \coordinate (X)    at (0,6.05);
 \coordinate (Xbar) at (0,5.45);
 \coordinate (Y)    at (0,4.85);
 \coordinate (Ybar) at (0,4.25);
 \coordinate (Z)    at (0,3.65);
 \coordinate (Zbar) at (0,3.05);

 \coordinate (u) at (0,1.70);
 \coordinate (v) at (0,1.10);
 \coordinate (w) at (0,.50);

 \node[group,anchor=west] at (5.35,8.70)
  {Anchor relation};
 \node[group,anchor=west] at (5.35,6.75)
  {Complement relations};
 \node[group,anchor=west] at (5.35,2.35)
  {Clause relations};

 \coordinate (anchor) at (5.20,7.75);

 \coordinate (compX) at (5.20,5.75);
 \coordinate (compY) at (5.20,4.55);
 \coordinate (compZ) at (5.20,3.35);

 \coordinate (clauseone) at (5.20,1.55);
 \coordinate (clausetwo) at (5.20,.65);

 \foreach \z in {X,Xbar,T,F}
  \draw[edge] (\z)--(compX);
 \foreach \z in {Y,Ybar,T,F}
  \draw[edge] (\z)--(compY);
 \foreach \z in {Z,Zbar,T,F}
  \draw[edge] (\z)--(compZ);

 \foreach \z in {X,Y,u,v}
  \draw[edge] (\z)--(clauseone);
 \foreach \z in {u,Zbar,T,w}
  \draw[edge] (\z)--(clausetwo);

 \foreach \z in {T,F}
  \draw[edge] (\z)--(anchor);

 \node[vertex,label=left:$X$] at (X) {};
 \node[vertex,label=left:$\overline X$] at (Xbar) {};
 \node[vertex,label=left:$Y$] at (Y) {};
 \node[vertex,label=left:$\overline Y$] at (Ybar) {};
 \node[vertex,label=left:$Z$] at (Z) {};
 \node[vertex,label=left:$\overline Z$] at (Zbar) {};

 \node[vertex,label=left:$u_1$] at (u) {};
 \node[vertex,label=left:$v_1$] at (v) {};
 \node[vertex,label=left:$w_1$] at (w) {};

 \node[vertex,label=left:$T$] at (T) {};
 \node[vertex,label=left:$F$] at (F) {};

 \node[relation,label=right:{$R(X,\overline X;T,F)$}]
  at (compX) {};
 \node[relation,label=right:{$R(Y,\overline Y;T,F)$}]
  at (compY) {};
 \node[relation,label=right:{$R(Z,\overline Z;T,F)$}]
  at (compZ) {};

 \node[relation,label=right:{$R(X,Y;u_1,v_1)$}]
  at (clauseone) {};
 \node[relation,label=right:{$R(u_1,\overline Z;T,w_1)$}]
  at (clausetwo) {};

 \node[relation,label=right:{$R(T,F;T,F)$}]
  at (anchor) {};

\end{tikzpicture}
\caption{
The bipartite incidence graph between the logical variables and the non-padding relations for the single-clause formula $\Phi=(x\lor y\lor\neg z)$, whose literal representatives are $X,Y,\overline Z$. We therefore use $(a,b,c)=(X,Y,\overline Z)$ for the clause relations $R(a,b;u_1,v_1)$ and $R(u_1,c;T,w_1)$. Logical variables on the left are grouped into anchors, literal representatives and auxiliary variables. Relations on the right are grouped into the anchor relation, complement relations and clause relations. Only one copy of each duplicated relation is shown. An edge indicates that a variable occurs in a relation. Every vertex has a path to $T$, so the graph is connected.}
\label{fig:nonpadding-incidence}
\end{figure}

\subsubsection{Encoding variables with prisms}

Before describing the variable gadget, we introduce the graph notation used to define it. The symbol $\square$ denotes the Cartesian product of graphs~\cite{Hammack2011}. For graphs $G_1=(V_1,E_1)$ and $G_2=(V_2,E_2)$, the product $G_1\square G_2$ has vertex set $V_1\times V_2$. In this context, the graphs $G_1$ and $G_2$ are called the factors. Two vertices $(u,a)$ and $(v,b)$ of $G_1\square G_2$ are adjacent precisely when either $u=v$ and $\{a,b\}\in E_2$ or $a=b$ and $\{u,v\}\in E_1$.

Let $C_k$ denote a cycle on $k\geq3$ vertices and let $K_2$ denote the graph consisting of two vertices joined by one edge. The prism graph $C_k\square K_2$ consists of two copies of $C_k$, with an additional edge joining each vertex in one copy to its corresponding vertex in the other. For example, $C_3\square K_2$ is the triangular prism graph, as illustrated in Fig.~\ref{fig:cartesian-product-C3-K2}, and $C_4\square K_2$ is the graph of a cube. In general, a prism has $2k$ vertices and $3k$ edges. Each vertex has two neighbors along its cycle and one in the other copy, so it is a cubic graph, meaning that every vertex has degree three.

\begin{figure}[t]
\centering
\begin{tikzpicture}[scale=1.05,
 vertex/.style={circle,draw,fill=white,inner sep=2.4pt},
 edge/.style={line width=.6pt}]

 \def\edgelen{2}
 \def\bridgelen{1.12}
 \pgfmathsetmacro{\triheight}{sqrt(3)*\edgelen/2}
 \pgfmathsetmacro{\dx}{\bridgelen*cos(42)}
 \pgfmathsetmacro{\dy}{\bridgelen*sin(42)}
 \pgfmathsetmacro{\kx}{\edgelen+1.15}
 \pgfmathsetmacro{\px}{\kx+\dx+1.35}

 \coordinate (c1) at (0,0);
 \coordinate (c2) at (\edgelen,0);
 \coordinate (c3) at ({\edgelen/2},\triheight);
 \draw[edge] (c1)--(c2)--(c3)--cycle;
 \foreach \i in {1,2,3}
  \node[vertex] at (c\i) {};
 \node at ({\edgelen/2},-.5) {$C_3$};

 \node[font=\Large]
  at ({\edgelen+.55},{\triheight/2}) {$\square$};

 \coordinate (k1) at (\kx,{(\triheight-\dy)/2});
 \coordinate (k2) at ({\kx+\dx},{(\triheight+\dy)/2});
 \draw[edge] (k1)--(k2);
 \foreach \j in {1,2}
  \node[vertex] at (k\j) {};
 \node at ({\kx+\dx/2},-.5) {$K_2$};

 \node[font=\Large]
  at ({\kx+\dx+.65},{\triheight/2}) {$=$};

 \begin{scope}[shift={(\px,0)}]
  \foreach \j/\xx/\yy in {1/0/0,2/\dx/\dy}{
   \coordinate (p1\j) at (\xx,\yy);
   \coordinate (p2\j) at ({\xx+\edgelen},\yy);
   \coordinate (p3\j)
    at ({\xx+\edgelen/2},{\yy+\triheight});
  }

  \foreach \i in {1,2,3}
   \draw[edge] (p\i1)--(p\i2);

  \foreach \j in {1,2}
   \draw[edge] (p1\j)--(p2\j)--(p3\j)--cycle;

  \foreach \i in {1,2,3}
   \foreach \j in {1,2}
    \node[vertex] at (p\i\j) {};

  \node at ({(\edgelen+\dx)/2},-.5)
   {$C_3\mathbin{\square}K_2$};
 \end{scope}

\end{tikzpicture}
\caption{Illustration of the Cartesian product $C_3\mathbin{\square}K_2$ yielding a triangular prism graph.}
\label{fig:cartesian-product-C3-K2}
\end{figure}

For each logical variable $z$, we now take the prism $Q_z=C_{N_z/2}\square K_2$, where $N_z$ is the number of occurrences of the variable in all relations. Setting $k=N_z/2$, label the vertices of $Q_z$ with $(i,j)$, where $i\in\{0,\ldots,k-1\}$ and $j\in\{0,1\}$. The index $i$ specifies a position on a cycle, while $j$ selects one of the two copies. Its edges join $(i,j)$ to $(i+1\bmod k,j)$ and join $(i,0)$ to $(i,1)$. Since $N_z$ is even and at least six, this is a cubic graph with exactly $N_z$ vertices, one for each occurrence of $z$. Give each prism edge weight one half. In this case, if the prism is part of any violated comb, then all its vertices will carry the same handle color, as proved in Section~\ref{subsec:correctness}.

For each variable $z$, assign each of the $N_z$ occurrences of $z$ in the relations to a distinct vertex of the prism $Q_z$. For each relation occurrence, introduce a six-vertex ladder. Connect the prism vertices assigned to the four arguments of its relation $(a,b,c,d)$ to its ports $(a_1,b_3,a_3,b_1)$, respectively, with half-weight edges. Repeated arguments occupy distinct prism vertices. Give all unspecified edges of the complete graph weight zero. This construction uses only the propositional logic formula and does not require knowing whether it is satisfiable.

Each prism vertex has three internal half-weight edges and one external half-weight edge. Each ladder port has one unit rung, one rail half-weight edge and one external half-weight edge. Each middle ladder vertex has one unit rung and two rail half-weight edges. All weighted degrees are therefore two, and the maximum number of incident edges for a vertex is four.

There are four arguments per relation, hence $4M$ prism vertices and $6M$ ladder vertices. The total number of vertices in the graph constructed from the prism and ladder gadgets is therefore
\begin{equation}\label{eq:counts}
 n=10M=40v+70m+30.
\end{equation}
The prisms contribute $6M$ half-weight edges, the ladders contribute $4M$ rail half-weight edges and $3M$ unit edges and the external connections between the prisms and ladders contribute $4M$ half-weight edges. Thus the graph has $14M$ half-weight edges and $3M$ unit edges. The number of vertices and positive edges is linear in $v+m$. Writing the full $\binom{n}{2}$-coordinate vector takes quadratic space in $n$, and is still polynomial.

\subsection{Correctness of the reduction}\label{subsec:correctness}

\subsubsection{Classification of tight cuts}

The degree equations alone do not guarantee subtour feasibility. We now prove a stronger statement that will show $x^* \in \SEP(n)$ for the constructed graph and also restrict every tight set used by a violated comb.

\begin{definition}
    The \textit{shores} of a cut $\delta(S)$ of a connected graph are the sets $S$ and $\overline{S}:=V \setminus S$ \cite{Frank2011}.
\end{definition}

\begin{lemma}\label{lem:small-cuts}
For every nonempty proper subset $S\subset V$ of the vertex set of the constructed graph, $x^*(\delta(S))\geq 2$. If equality holds, at least one of the shores $S$ and $V\setminus S$ consists of a single prism vertex or is contained in a single ladder.
\end{lemma}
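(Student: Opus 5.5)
Fix a nonempty proper $S$. I will decompose the cut $x^*(\delta(S))$ gadget by gadget and bound each contribution from below. The vertex set is partitioned into $M$ ladders $U_1,\dots,U_M$ and the prisms $Q_z$. Every positive edge lies inside a ladder, inside a prism, or is an external half-edge joining a ladder port to a prism vertex. I will charge each cut edge to the gadget(s) containing its endpoints, so that each external half-edge is counted from both sides with weight $\tfrac14$ each. This gives the decomposition
\[
x^*(\delta(S))=\sum_{U}c_U(S)+\sum_{z}c_{Q_z}(S),
\]
where $c_U(S)$ is the internal ladder cut plus half the cut external edges at $U$, and similarly for $c_{Q_z}(S)$. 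By Lemma~\ref{lem:ladder-tight} and Eq.~\eqref{eq:ladder-cut}, a ladder $U$ that is \emph{split} by $S$ (meets both $S$ and its complement) has rooted-cut value at least $2$ for $S\cap U$ and for $U\setminus S$. Since the root is replaced by four distinct external neighbours, I expect a lower bound of the form $c_U(S)\geq 1$ for every split ladder. Equality should hold only for the tight configurations listed in Lemma~\ref{lem:ladder-tight}, and only when the external edges are favourably cut.

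For prisms I use that $C_k\square K_2$ with half-weights is $3$-edge-connected with edge weight $\tfrac12$. Any split of $Q_z$ therefore cuts at least $3$ prism edges, i.e.\ at least $\tfrac32$, unless one side is a single vertex. In that case exactly $3$ edges are cut plus possibly the external one. A $2$-vertex side cuts $4$ edges; in general cuts of a prism with $|S\cap Q_z|\ge2$ on both sides have at least $4$ edges.

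The key global step is a connectivity argument on the quotient. Contract each unsplit gadget to a single node colored by its side. If no gadget is split, the cut consists solely of external half-edges between differently colored ladders and prisms. Property (iv), together with the padding cycle, should force enough such edges. Concretely, each ladder touches prisms by $4$ half-edges and each prism by $N_z\ge6$ of them. If a set of whole gadgets is separated, the boundary in this bipartite multigraph must be at least $4$ half-edges, giving weight $\ge2$. Equality should force the separated side to be a single ladder. To prove this, I will show the gadget multigraph (ladders vs prisms, edges = port attachments) has edge-connectivity $4$ with only ladder-singletons attaining it. Prisms have degree $\ge6$, and the padding relations link all prisms in a cycle via ladders that each carry two edges to each of two consecutive prisms, so separating two groups of prisms cuts the padding cycle at least twice, i.e.\ $\ge 8$ half-edges. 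Once gadgets may be split, I combine the two bounds. A split gadget contributes its internal cut, but this only increases the total relative to treating the gadget as two pieces. So I will argue by induction/uncrossing: replace $S$ by moving a split gadget wholly to one side whenever this does not increase the cut, using the local bounds above. The case analysis of when this is impossible should leave only $S$ (or its complement) inside one ladder or equal to one prism vertex.

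The main obstacle is this last combination step: making the local-versus-global accounting sharp enough that equality $2$ pins down the shape. I expect the cleanest route is to count half-edges. Every cut is a multiple of $\tfrac12$, so weight $2$ means at most $4$ half-edge units, with a rung counting as $2$. I will then enumerate how $4$ units can be distributed. If two different gadgets are split, each uses at least $2$ units, which leaves no room for separating further gadgets; this forces both pieces to be adjacent and essentially one side to be a single vertex or ladder subset. If one gadget is split and other gadgets are separated, I will show the padding/incidence connectivity demands additional crossing edges exceeding the budget. I would first try this budget enumeration directly, checking the prism case with a single-vertex shore separately, since that is exactly the exceptional shore in the statement.
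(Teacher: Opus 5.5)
You have the right local ingredients: a split prism costs at least $\tfrac32$, a split ladder at least $1$, and you invoke the padding cycle, property (iv) and the rooted-ladder bound of Lemma~\ref{lem:ladder-tight}. But the step that actually proves the lemma, combining these to force the equality case, is only announced (``induction/uncrossing'', ``budget enumeration''). Two facts you state for it are also wrong.

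First, the padding count. A padding ladder $R(z_i,z_{i+1};z_i,z_{i+1})$ has two ports on $Q_{z_i}$ and two on $Q_{z_{i+1}}$. When it is monochromatic between oppositely coloured prisms, exactly two half-edges cross, not four. Two colour transitions therefore only guarantee four half-edges (weight $2$), not eight, which meets the budget instead of exceeding it. When no prism is split but the prisms receive both colours, strictness comes only from the extra $\geq\tfrac12$ of a non-padding ladder meeting both colours, which property (iv) supplies.

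Second, your claim that every prism cut with at least two vertices on each side has at least four edges fails for $C_3\square K_2$. Separating the two triangles cuts only the three joining edges, and triangular prisms do occur ($N_{v_j}=N_{w_j}=6$).

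That prism error hides an ingredient you never use: property (iii), that no relation contains a variable more than twice. Without it the statement is false. If a ladder $U$ had three ports attached to one triangle $P$ of a triangular prism, then $S=P\cup U$ would have $x^*(\delta(S))=\tfrac32+\tfrac12=2$, with neither shore of the allowed form.

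The paper closes the argument by a direct case analysis.
\begin{itemize}
\item A budget of $2$ allows at most one split prism, and a split prism excludes any split ladder.
\item The padding cycle, together with (iv) when no prism is split, forces all whole prisms to share a colour. After complementing, they lie outside $S$.
\item If no prism is split, every nonempty $S\cap U$ costs its rooted-ladder cut, which is at least $2$. So $S$ meets a single ladder.
\item If a prism is split with $P=S\cap Q_z$, each ladder has $k_U\leq 2$ neighbours in $P$ by (iii). Its external cost is then at least $k_U/2$ whether or not it lies in $S$, giving $x^*(\delta(S))\geq\tfrac32+|P|/2$. Hence $|P|=1$, and no ladder can lie in $S$.
\end{itemize}
Your uncrossing plan moves split gadgets wholesale to one side. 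That changes $S$ and discards the information about its shape needed for the second sentence of the lemma. You would have to replace it with an analysis of this kind.
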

\begin{proof}
Fix a nonempty proper subset $S$ and color its vertices one, with all remaining vertices colored zero. A prism or ladder is \emph{split} if it contains both colors. Partition the support edges into disjoint groups: the interior edges of each prism, and the interior edges together with all four external edges of each ladder. All lower bounds below refer to these disjoint edge groups, so their contributions can be added.

First, a split prism contributes a cut weight of at least $3/2$ internally. To verify this, it is enough to show that deleting at most two edges cannot disconnect a cubic prism. If neither cycle loses more than one edge, both remain connected paths or cycles, and at least one of the $k\geq3$ joining edges remains. If both deleted edges belong to the same cycle, the other cycle and every joining edge remain, so every component of the broken cycle attaches to the intact one. These cases exhaust all deletions of at most two edges.

Second, a split ladder contributes a cut weight of at least one internally. A split rung already contributes a cut weight of one. If no rung is split, the three whole rungs are monochromatic. Since the ladder is split, consecutive rungs of opposite colors exist, and their two rail edges contribute a cut weight of one.

Third, consider a padding relation joining two whole prisms of different colors. If its ladder is split, the preceding bound gives a contribution of at least one. If its ladder is monochromatic, exactly two of its four external edges cross, again contributing one. Finally, any non-padding relation incident to whole prisms of both colors contributes at least $1/2$. Indeed, a monochromatic ladder cuts at least one external edge, while a split ladder contributes at least one internally.

Assume now that $x^*(\delta(S))\leq2$. We show that equality must hold and that one of the two shores has the stated form. Since each split prism contributes at least $3/2$ to the cut, at most one prism is split. Suppose first that exactly one prism is split. No ladder can be split since this would give a cut weight of at least $3/2+1>2$. The padding ladders, together with the prisms, form a cycle. See Fig.~\ref{fig:padding-cut}~(a) for example. If the remaining whole prisms have different colors, then deleting the split prism from this padding cycle would leave a path containing prisms of both colors. Two consecutive prisms on this path must therefore have different colors, and the padding ladder joining them contributes at least one to the cut. Together with the split prism's contribution, this gives $x^*(\delta(S))\geq3/2+1>2$, a contradiction. Thus, if a prism is split, no ladder can be split and all remaining prisms have the same color.

Suppose next that no prism is split and the whole prisms have different colors. As illustrated in Fig.~\ref{fig:padding-cut} (b), traversing the padding cycle encounters at least two color transitions, whose distinct padding ladders contribute together at least two. Connectedness of the non-padding incidence graph also guarantees a non-padding relation incident to both colors. Otherwise every relation would connect only same-colored variables, contradicting connectedness. That additional ladder contributes at least $1/2$. Adding this contribution to the contribution of the padding ladders contradicts the assumption $x^*(\delta(S))\leq2$.

\begin{figure}[t]
\centering
\begin{tikzpicture}[scale=1.05,
 vertex/.style={circle,draw,fill=white,inner sep=2.4pt,
                minimum size=9pt},
 edge/.style={line width=.6pt},
 ladder/.style={rectangle,draw,line width=.5pt,fill=white,
                inner sep=3pt,font=\small,
                minimum width=9mm,minimum height=5.5mm},
 crossing/.style={ladder,fill=black!15},
 split/.style={vertex,path picture={
   \fill[black]
    (path picture bounding box.south)
    rectangle (path picture bounding box.north east);
 }}]

 \foreach \panel/\xx/\heading in
  {1/0/{(a) One split prism},2/7.80/{(b) All prisms whole}}{
  \begin{scope}[shift={(\xx,0)}]
   \node[font=\small] at (0,3.55) {\heading};

   \coordinate (q1) at (0,2.50);
   \coordinate (q2) at (2.165,1.25);
   \coordinate (q3) at (2.165,-1.25);
   \coordinate (q4) at (0,-2.50);
   \coordinate (q5) at (-2.165,-1.25);
   \coordinate (q6) at (-2.165,1.25);

   \foreach \i/\j in {1/2,2/3,3/4,5/6,6/1}
    \draw[edge] (q\i)--(q\j);

   \ifnum\panel=2
    \node[crossing] (r) at (0,0) {$R$};
    \draw[edge,dashed] (r.north east)--(q2);
    \draw[edge,dashed] (r.south west)--(q5);
    \draw[edge,dashed] (r.north west)--++(-.40,.30);
    \draw[edge,dashed] (r.south east)--++(.40,-.30);
   \fi

   \node[ladder] at (1.0825,1.875) {$P_1$};
   \node[ladder] (p4) at (-1.0825,-1.875) {$P_4$};
   \node[ladder] at (-1.0825,1.875) {$P_q$};
   \ifnum\panel=1
    \node[ladder] at (2.165,0) {$P_2$};
    \node[crossing] at (1.0825,-1.875) {$P_3$};
    \node[ladder] at (-2.165,0) {$P_{q-1}$};
   \else
    \node[crossing] at (2.165,0) {$P_2$};
    \node[ladder] at (1.0825,-1.875) {$P_3$};
    \node[crossing] at (-2.165,0) {$P_{q-1}$};
   \fi

   \ifnum\panel=1
    \node[split] (v1) at (q1) {};
    \foreach \i in {2,3}
     \node[vertex,fill=black] (v\i) at (q\i) {};
    \foreach \i in {4,5,6}
     \node[vertex] (v\i) at (q\i) {};
   \else
    \foreach \i in {1,2,6}
     \node[vertex,fill=black] (v\i) at (q\i) {};
    \foreach \i in {3,4,5}
     \node[vertex] (v\i) at (q\i) {};
   \fi

   \draw[edge] (v4)--(p4);
   \draw[edge] (p4)--
    node[midway,sloped,fill=white,inner sep=.2pt]
     {$\mathinner{\cdotp\mkern-2mu\cdotp\mkern-2mu\cdotp}$}
    (v5);

   \node[anchor=south] at (0,2.73) {$Q_{z_1}$};
   \node[anchor=west] at (2.40,1.25) {$Q_{z_2}$};
   \node[anchor=west] at (2.40,-1.25) {$Q_{z_3}$};
   \node[anchor=north] at (0,-2.73) {$Q_{z_4}$};
   \node[anchor=east] at (-2.40,-1.25) {$Q_{z_{q-1}}$};
   \node[anchor=east] at (-2.40,1.25) {$Q_{z_q}$};
  \end{scope}
 }

\end{tikzpicture}
\caption{Two configurations excluded by a cut of weight at most two. Circles represent variable prisms, boxes represent ladders, and black and white indicate membership in $S$ and its complement. Solid lines describe the padding cycle with the $P_i$ ladders being padding ladders. Dashed lines in (b) show two variable neighbors of the non-padding ladder $R$ (the two dangling dashed lines are attached to prisms that are not displayed in the figure). Grey boxes mark the ladders contributing to the cut lower bound. In (a), the split prism $Q_{z_1}$ contributes at least $3/2$ to the cut, and the padding ladder $P_3$ between oppositely colored whole prisms contributes at least one to the cut. In (b), the two padding ladders $P_2,P_{q-1}$ contribute at least one each, and $R$ contributes at least $1/2$ to the cut. A ladder's contribution includes its internal and external edges. In each panel, the contributions come from disjoint edge groups and give $x^*(\delta(S))\geq5/2>2$.}
\label{fig:padding-cut}
\end{figure}

In either case, all whole prisms have a common color. Complement $S$ if necessary so that they all lie outside $S$. This leaves $\delta(S)$ unchanged. There are now two cases.

If no prism is split, $S$ is a union of portions of ladders. For each nonempty portion, all external endpoints of the ladder are outside $S$. Therefore, the cut contribution of a portion is exactly its cut weight in the rooted ladder, which is at least two by Lemma~\ref{lem:ladder-tight}. To satisfy $x^*(\delta(S))\leq2$, only one ladder can meet $S$, and its contribution must equal two.

If a prism is split, let $P$ be its nonempty intersection with $S$. No ladder can be split, so every ladder is monochromatic. Since a relation contains a logical variable at most twice, a ladder $U$ has $k_U\leq2$ external vertex neighbors in $P$ and $4-k_U$ external vertex neighbors outside of $S$. Its external contribution is either $k_U/2$ when outside $S$ or $(4-k_U)/2$ when inside. Both are at least $k_U/2$. Each vertex of $P$ has exactly one external edge, so $\sum_U k_U=|P|$, giving
\begin{equation}\label{eq:singleton-bound}
 x^*(\delta(S))\geq\frac32+\frac{|P|}{2}.
\end{equation}
Requiring $x^*(\delta(S)) \leq 2$ forces $|P|=1$. Its prism contribution is exactly $3/2$, and the external lower bound is $1/2$. Now every $k_U\leq1$. Placing any ladder inside $S$ increases its external cut contribution from $k_U/2$ to $(4-k_U)/2\geq 3/2$, exceeding the available cut budget. All ladders therefore lie outside $S$, so $S$ is precisely the single prism vertex and its cut has weight two.
\end{proof}

Every vertex has weighted degree two and every edge weight is nonnegative. Lemma~\ref{lem:small-cuts} excludes a cut from a nonempty proper subset to have a weight below two, proving that the fractional point $x^*$ represented by the constructed graph is in $\SEP(n)$. This conclusion follows from the constructed graph; subtour feasibility was not assumed in the proof.

\subsubsection{Excluding teeth that span several gadgets}

We now address the remaining obstacle to show that violated combs in the constructed graph yield a satisfied propositional logic formula. Call each six-vertex ladder a block, and also call each individual prism vertex a block. These blocks partition $V$, the set of vertices of the constructed graph. By Lemma~\ref{lem:small-cuts}, every tight set is contained in one block or has its complement contained in one block.

\begin{lemma}\label{lem:local-teeth}
Each tooth of every violated comb in the constructed graph is contained in a single ladder.
\end{lemma}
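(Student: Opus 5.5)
The plan is to combine Lemma~\ref{lem:equality} with the tight-cut classification of Lemma~\ref{lem:small-cuts}, and to reduce the one bad case to Lemma~\ref{lem:rooted-free}. Fix a violated comb with handle $H$ and teeth $T_1,\ldots,T_s$, $s\geq3$. By Lemma~\ref{lem:equality}, each $A_i=T_i\cap H$, $B_i=T_i\setminus H$ and $T_i$ is tight, and the residual vanishes. So $\delta(H)$, restricted to positive edges, is exactly the union of the semicuts, each of weight one.

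By Lemma~\ref{lem:small-cuts}, each tight $T_i$ either lies in a single block or has its complement in a single block. A tooth has at least two vertices, so if $T_i$ lies in a block, that block is a ladder and we are done. It remains to exclude a tooth $T_i$ whose complement $V\setminus T_i$ lies in one block $W$. The other $s-1\geq2$ teeth are disjoint from $T_i$, so they lie in $W$. Since each has at least two vertices, $W$ cannot be a prism vertex; it is a six-vertex ladder containing at least two whole teeth.

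The goal in this case is to contract $V\setminus W$ to a single root $r$ and obtain a violated comb on the rooted ladder $L^\circ$, contradicting Lemma~\ref{lem:rooted-free}. After contraction, cut weights of subsets of $W$ are unchanged, because all edges leaving $W$ are its four external half-edges. The teeth $T_j\subseteq W$ for $j\ne i$ stay tight dominos. The tooth $T_i$ becomes $(T_i\cap W)\cup\{r\}$, and cut weights of sets containing $r$ equal those of their complements inside $W$. For this to be well defined, the whole of $V\setminus W$ must receive one handle colour, so that it falls entirely into $A_i$ or entirely into $B_i$.

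To prove this, I would use that the positive support on $V\setminus W$ is connected: prisms are connected, the remaining ladders are connected, and the incidence structure together with the padding cycle links everything. Zero residual then says that the only positive edges of $\delta(H)$ lying outside $W$ are semicut edges of $T_i$, since all other semicuts lie inside $W$. Every such edge joins $A_i$ to $B_i$.

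Next I would apply Lemma~\ref{lem:small-cuts} to $A_i$ and $B_i$. Each lies in a single block or has its complement in one; they are disjoint and their union $T_i$ contains all of $V\setminus W$. Hence at most one of them, say $A_i$, can fail to fit in a block, and the other, $B_i$, lies in a single block. If $B_i\subseteq W$, then $V\setminus W\subseteq A_i$ and the contraction goes through. If instead $B_i$ is a single prism vertex or meets only one outside ladder, I would use the fact that $B_i$ is tight and its semicut has weight one. Since $T_i$ is tight as well, this should force $B_i$ to be all of $V\setminus W$ minus pieces, or give a direct contradiction via $x^*(\delta(T_i))=2$: $B_i$ would have to sit between $A_i$ and $W$. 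For a prism vertex $p$ this looks impossible, because $p$ has three prism neighbours outside $W$ which lie in $A_i$ (weight $3/2$), plus possibly one edge toward $W$; then $\delta(T_i)$ would be forced to contain edges to $W$, which must be checked against $x^*(\delta(T_i))=2$.

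I expect this last step, showing that $V\setminus W$ lies entirely on one side of the handle, to be the main obstacle. I would attack it by this explicit case check on the small shore $B_i$, using the degree counts already computed in the proof of Lemma~\ref{lem:small-cuts}. Once it is done, the contracted configuration is a comb on the seven vertices of $L^\circ$ with the same slack. Its teeth and halves remain nonempty because $T_i\cap W$ or the root supplies each side. This contradicts Lemma~\ref{lem:rooted-free}.
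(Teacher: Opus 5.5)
\textbf{Verdict: genuine gap.} Your overall plan matches the paper's: use Lemma~\ref{lem:equality} to make the teeth and their halves tight, apply the block classification of Lemma~\ref{lem:small-cuts}, then contract everything outside one ladder to get a violated comb on $L^\circ$, contradicting Lemma~\ref{lem:rooted-free}. The proof is incomplete, though, exactly at the step you flag as the main obstacle. That step is showing that $V\setminus W$ lies inside one half of $T_i$, so that the contraction respects the handle.

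The case where $B_i$ is a single prism vertex or sits in an outside ladder is handled only by conjecture (``this should force\ldots'', ``which must be checked''). Your connectivity-plus-zero-residual remark does not settle it either. Zero residual only says that positive handle-cut edges outside $W$ belong to the semicut of $T_i$ itself. Such edges join $A_i$ to $B_i$ and could lie outside $W$, so connectivity of the support on $V\setminus W$ does not yield a single colour there.

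The missing idea is a short block-intersection argument, and it shows that the troublesome case never arises.
\begin{itemize}
\item Both halves cannot lie in single blocks. Otherwise $T_i\supseteq V\setminus W$ would be covered by two blocks, whereas $V\setminus W$ contains $5M-1\geq3$ whole blocks.
\item Hence one half, say $A_i$ (the roles are symmetric), satisfies $V\setminus A_i\subseteq W'$ for some block $W'$.
\item Now $V\setminus T_i$ is nonempty, because the other teeth lie in it. It satisfies $V\setminus T_i\subseteq V\setminus A_i\subseteq W'$, and also $V\setminus T_i\subseteq W$.
\item Distinct blocks are disjoint, so $W'=W$. Therefore $V\setminus W\subseteq A_i$, and automatically $B_i\subseteq V\setminus A_i\subseteq W$.
\end{itemize}
So $B_i$ can never lie in an outside block, and no degree-count case check is needed. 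With this in hand, your contraction to $L^\circ$ goes through as you describe. This is how the paper closes the step.
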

\begin{proof}
The constructed fractional point $x^*$ is half-integral and, as established above using Lemma~\ref{lem:small-cuts}, belongs to $\SEP(n)$. Hence Lemma~\ref{lem:equality} applies: each tooth of a violated comb and both its halves are tight. By Lemma~\ref{lem:small-cuts} again, each of these sets is contained in a block or has its complement contained in a block. We show that the second case cannot occur for a tooth.

The following is illustrated in Fig.~\ref{fig:tooth-contraction} (a). Suppose some tooth $T_i$ is not contained in a block. Then $T_i = V \setminus S$ with $\varnothing \neq S \subseteq U$ for some block $U$. Two consequences follow. First, every other tooth is disjoint from $T_i$ and therefore lies in $S \subseteq U$. Second, $V \setminus U \subseteq T_i$, so $T_i$ contains every block other than $U$, which is at least three blocks.

Since two halves of a tooth each contained in a single block cover at most two blocks, at least one half of $T_i$ is not contained in a block. Replacing $H$ by $V\setminus H$ preserves the comb conditions and the left-hand side of the comb inequality and has as only effect to exchange the two halves of every tooth. We may therefore assume this half to be $A_i = T_i \cap H$ and define as usual $B_i = T_i \setminus H$. The following argument applies even if $A_i$ and $B_i$ are interchanged. By Lemma~\ref{lem:equality}, $A_i$ must be tight, so by Lemma~\ref{lem:small-cuts} we must have $V \setminus A_i \subseteq W$ for some block $W$. The inclusion $A_i\subseteq T_i$ then yields $S=V\setminus T_i\subseteq V\setminus A_i \subseteq W$. From $S \subseteq W$ and $S \subseteq U$, the nonempty set $S$ lies in $U \cap W$. As distinct blocks are disjoint, we must have $W = U$. Consequently, $V\setminus A_i\subseteq U$, or equivalently, $V\setminus U\subseteq A_i$. Hence
\begin{equation}
  V \setminus U \subseteq A_i \subseteq H
  \quad\text{and}\quad
  B_i \subseteq U \setminus S .
\end{equation}

\begin{figure}[t]
\centering
\begin{tikzpicture}[scale=1.05,
 vertex/.style={circle,draw,fill=white,inner sep=2.4pt},
 half/.style={line width=.6pt},
 unit/.style={line width=1.7pt},
 boundary/.style={line width=.5pt,rounded corners=5pt},
 tooth/.style={line width=.5pt,rounded corners=6pt}]

 \node[font=\small] at (2.70,5.02)
  {(a) Hypothetical tooth spanning blocks};

 \filldraw[boundary,fill=black!10]
  (0,0) rectangle (5.40,4.35);
 \node at (5.00,4.08) {$V$};

 \draw[boundary] (.28,.16) rectangle (5.12,2.82);
 \node at (4.84,2.57) {$U$};

 \node at (2.45,3.86) {$V\setminus U\subseteq A_i$};
 \foreach \xx in {.78,1.70,3.42,4.34}
  \node[vertex,fill=black] at (\xx,3.26) {};
 \node at (2.57,3.26) {$\cdots$};

 \filldraw[boundary,fill=white]
  (.65,.42) rectangle (2.05,2.25);
 \node at (1.35,1.95) {$B_i$};
 \node[vertex] at (1.35,1.43) {};
 \node[vertex] at (1.35,1.01) {};

 \filldraw[boundary,fill=white]
  (2.70,.42) rectangle (4.57,2.25);
 \node[font=\small] at (3.635,2.00)
  {$S=V\setminus T_i$};

 \foreach \xx/\name in {3.16/j,4.02/k}{
  \draw[tooth]
   ({\xx-.23},.99) rectangle ({\xx+.23},1.73);
  \node[vertex,fill=black] at (\xx,1.49) {};
  \node[vertex] at (\xx,1.22) {};
  \node[font=\small] at (\xx,.77) {$T_\name$};
 }

 \node at (2.70,-.46)
  {$T_i=A_i\cup B_i=V\setminus S$};

 \draw[->,line width=.7pt]
  (5.76,2.30)--(7.23,2.30);
 \node[anchor=south,align=center,font=\small]
  at (6.50,2.47)
  {contract\\[2pt]$V\setminus U$\\[2pt]to $r$};

 \begin{scope}[shift={(8.45,.45)}]
  \node[font=\small] at (2,4.57)
   {(b) The rooted ladder $L^\circ$};

  \foreach \i/\xx in {1/0,2/2,3/4}{
   \coordinate (a\i) at (\xx,1.60);
   \coordinate (b\i) at (\xx,0);
  }
  \coordinate (r) at (2,3.55);

  \draw[line width=.5pt]
   (2.00,4.08)
   .. controls (-1.38,4.08) and (-1.72,-.62)
      .. (-.26,-.77)
   .. controls (.16,-.813) and (.64,-.65)
      .. (.51,-.25)
   .. controls (.10,1.00) and (.88,2.50)
      .. (2.23,3.30)
   .. controls (2.70,3.58) and (2.66,4.08)
      .. (2.00,4.08)
   -- cycle;

  \draw[tooth]
   (1.68,-.70) rectangle (2.32,2.15);
  \draw[tooth]
   (3.68,-.70) rectangle (4.32,2.15);

  \draw[half] (r)--(a1);
  \draw[half] (r)--(a3);

  \draw[half] (r)
   .. controls (-1.10,3.30) and (-1.10,-.65)
      .. (b1);
  \draw[half] (r)
   .. controls (5.10,3.30) and (5.10,-.65)
      .. (b3);

  \foreach \i in {1,2,3}
   \draw[unit] (a\i)--(b\i);

  \foreach \i/\j in {1/2,2/3}{
   \draw[half] (a\i)--(a\j);
   \draw[half] (b\i)--(b\j);
  }

  \node[vertex,label=above:$a_1$] at (a1) {};

  \foreach \i in {2,3}
   \node[vertex,fill=black,label=above:$a_\i$]
    at (a\i) {};

  \foreach \i in {1,2,3}
   \node[vertex,label=below:$b_\i$]
    at (b\i) {};

  \node[vertex,fill=black,label=above:$r$]
   at (r) {};

  \node[anchor=west,inner sep=0pt]
   at (2.61,3.96) {$T'_i$};
  \node[anchor=east,inner sep=0pt]
   at (1.59,.80) {$T_j$};
  \node[anchor=east,inner sep=0pt]
   at (3.59,.80) {$T_k$};

  \node[font=\small] at (2.38,-1.03)
   {$T'_i=(U\setminus S)\cup\{r\}$};
 \end{scope}

\end{tikzpicture}
\caption{Contraction of a hypothetical tooth spanning several blocks. Panel (a) shows set inclusions: $T_i=V\setminus S$, all vertices outside the block $U$ belong to $A_i$, and the other teeth lie in $S\subseteq U$. $A_i$ is represented by the grey area. Black vertices belong to the handle and white vertices do not. Contracting $V\setminus U$ to $r$ produces the rooted ladder in (b). The illustrated case has $T'_i=\{r,a_1,b_1\}$, $T_j=\{a_2,b_2\}$ and $T_k=\{a_3,b_3\}$.}
\label{fig:tooth-contraction}
\end{figure}

The block $U$ contains the nonempty set $B_i$ and the $s - 1 \geq 2$ other teeth, each with at least two vertices, so $|U| \geq 5$. Since every block is either a single prism vertex or a six-vertex ladder, $U$ must be a ladder.

Given that $U$ is a ladder, contract all vertices outside $U$ to a single root vertex. See Fig.~\ref{fig:tooth-contraction} (b). This simplification can be made because every set appearing in the comb either contains all outside vertices or none of them. This holds for the handle, the large tooth, and every remaining tooth. Hence all their cut weights are preserved by this contraction. The large tooth keeps two nonempty halves because its half containing the outside vertices keeps the root, while the other half lies in $U$. The other teeth are unchanged. Tooth disjointness and their odd number are preserved as well. The contracted weighted graph is exactly $L^\circ$ and we have obtained a violated comb in $L^\circ$, contrary to Lemma~\ref{lem:rooted-free}.

Hence every tooth lies in a single block. A prism-vertex block cannot hold a tooth with two nonempty halves, so every tooth lies in a ladder.
\end{proof}

\subsubsection{Nonplanar support}

\begin{lemma}
\label{lem:nonplanar}
Replace the prism associated with the auxiliary variable $v_1$ by the bipartite graph $K_{3,3}$ on the same six vertices, assigning weight one half to every internal edge and retaining all external connections. The resulting point is half-integral and belongs to $\SEP(n)$. Its support is nonplanar and has maximum degree four. The conclusions of Lemmas~\ref{lem:small-cuts} and~\ref{lem:local-teeth} remain valid, with ``prism vertex'' interpreted as ``variable-gadget vertex''. The numbers of vertices and positive edges are unchanged.
\end{lemma}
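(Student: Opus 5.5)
The plan is to check that the only property of the prism used in Lemmas~\ref{lem:small-cuts} and~\ref{lem:local-teeth} is that it is a cubic graph with half-weight edges in which deleting at most two edges cannot disconnect it. That gives a split variable gadget an internal contribution of at least $3/2$, with equality for a single vertex. So I will prove that $K_{3,3}$ has the same property and then rerun those proofs verbatim.

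First I verify that $v_1$'s gadget really has six vertices. The variable $v_1$ occurs only in the first relation of clause~$1$ (duplicated, so two occurrences) and in two padding relations (four occurrences), giving $N_{v_1}=6$. Its prism is $C_3\square K_2$, with six vertices and nine edges, and $K_{3,3}$ also has six vertices and nine edges and is cubic. Every gadget vertex therefore still has three internal half-weight edges and one external half-weight edge. Consequently degrees are still two, the maximum support degree is still four, all weights lie in $\{0,\tfrac12,1\}$, and the vertex and positive-edge counts are unchanged.

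Nonplanarity is immediate because the support contains $K_{3,3}$ as a subgraph (Kuratowski). For $\SEP(n)$ membership and the tight-cut classification, I will show that $K_{3,3}$ is $3$-edge-connected. Any nonempty proper $S$ with $|S|=t\le 3$ on the side of smaller size has $|\delta(S)|=3t-2|E(S)|$. Since $K_{3,3}$ is triangle-free and bipartite, $|E(S)|\le t-1$ for $t\le 3$ (a tree bound, attained by paths), so $|\delta(S)|\ge t+2\ge 3$. The minimum is reached only when $t=1$ or when $S$ induces a path on $t\le 3$ vertices; in weight terms this gives at least $3/2$. This is exactly the first claim in the proof of Lemma~\ref{lem:small-cuts}.

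The rest of that proof (the padding-cycle argument, the non-padding connectivity argument, and the bound in Eq.~\eqref{eq:singleton-bound}) never uses the internal structure of the prism beyond the weight $3/2$ for a split gadget and $3/2$ exactly for a singleton. The singleton conclusion follows again from $|P|=1$, and $P$ is then a single gadget vertex. I expect the main (mild) obstacle to be checking that the equality step does not rely on prism-specific structure. It does not: when $|P|\ge2$ the bound in Eq.~\eqref{eq:singleton-bound} already exceeds $2$, whatever the internal cut is.

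Finally, the proof of Lemma~\ref{lem:local-teeth} uses only the block partition, Lemma~\ref{lem:small-cuts}, Lemma~\ref{lem:equality} and Lemma~\ref{lem:rooted-free}, with a variable-gadget vertex playing the role of a prism vertex. It therefore transfers with the renaming stated in the lemma.
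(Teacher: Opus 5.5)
Your proposal is correct and follows essentially the same route as the paper. Both proofs check $N_{v_1}=6$, note that $K_{3,3}$ is cubic with the same vertex and edge counts, and show that a split gadget cuts at least three internal half-weight edges. The paper counts $a(3-b)+b(3-a)=3(a+b)-2ab\geq 3$ across the bipartition, whereas you use the triangle-free bound $|E(S)|\leq t-1$; after that step, both observe that Lemmas~\ref{lem:small-cuts} and~\ref{lem:local-teeth} only use this bound together with the unchanged external edges.

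One harmless slip: the value $3$ is attained only at $t=1$, since your bound gives at least $4$ and $5$ for $t=2,3$. Nothing downstream depends on that remark.
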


\begin{proof}
The auxiliary variable $v_1$ occurs once in the first relation encoding the first clause of the 3-SAT formula. Duplication gives two occurrences of this variable, and the padding relations add four. Hence $N_{v_1}=6$, so its original gadget is a triangular prism.

Partition these six vertices into two sets of size three and replace the internal prism edges by all nine edges between the two sets. This gives the complete bipartite graph $K_{3,3}$ which is known to be nonplanar~\cite{Kuratowski1930}. Retain the variable occurrence assigned to each vertex and its external half-weight edge. Both the triangular prism and $K_{3,3}$ are cubic graphs with six vertices and nine edges. Consequently, the replacement preserves half-integrality, all weighted degrees, the maximum support degree, and the
graph-size counts.

We verify the internal cut bound needed in Lemma~\ref{lem:small-cuts}. Consider a nonempty proper subset $S$ in the replacement gadget. Complementing it within the gadget if necessary, assume that it contains at most three vertices. If it contains $a$ vertices from one partite set and $b$ from the other, then $1\leq a+b\leq3$. Each of the $a$ vertices in one partite set has $3-b$ neighbors outside $S$ and each of the $b$ vertices in the other has $3-a$. Thus, the internal cut of $S$ contains $a(3-b)+b(3-a)=3(a+b)-2ab \geq 3$ edges and therefore has weight at least $3/2$.

Let $\widehat{x}^*$ denote the modified fractional point, and consider a nonempty proper subset $S$ with $\widehat{x}^*(\delta(S))\leq2$. Every split variable gadget contributes at least $3/2$ internally. All external edges remain untouched so all arguments from Lemma~\ref{lem:small-cuts} still apply with ``prism'' interpreted as ``variable gadget'' and ``prism vertex'' interpreted as ``variable-gadget vertex''. Consequently, the cut classification of this lemma remains valid for the modified construction. Together with nonnegativity and the degree equations, this proves $\widehat{x}^*\in\SEP(n)$.

Lemma~\ref{lem:local-teeth} then applies using this cut classification, with each vertex of the replacement gadget treated as a singleton block. The rooted-ladder argument is unchanged. Finally, the support contains $K_{3,3}$ as a subgraph and is therefore nonplanar.
\end{proof}

\subsubsection{Both directions and NP membership}

\begin{proof}[Proof of Theorem~\ref{thm:main}]
Construct the fractional point $x^*$ from the 3-SAT formula $\Phi$ as in Section~\ref{subsec:construction}. We have proved that it is half-integral and belongs to $\SEP(n)$, with maximum support degree four and polynomial encoding size. The support graph can be explicitly made nonplanar by Lemma~\ref{lem:nonplanar} with Lemmas~\ref{lem:small-cuts} and~\ref{lem:local-teeth} remaining applicable.

Suppose first that a comb is violated. By Lemma~\ref{lem:local-teeth}, all selected semicuts lie inside ladders. Zero residual imposed by Lemma~\ref{lem:equality} then forces every external edge and every internal edge of a variable gadget to be uncut. Every variable gadget is connected, so all its vertices have a common handle color. This color defines the Boolean value of the corresponding logical variable, shared by all its occurrences. The external edges transfer these colors to the
attached ports. Within each ladder, Lemma~\ref{lem:relation} yields its satisfied relation and its tooth parity.

The total tooth count is odd. Duplicate relation parities cancel in pairs and padding parities cancel by Eq.~\eqref{eq:cycle-parity}. The remaining anchor contribution is $T\xor F$, so $T\ne F$. Complementing the handle if necessary makes $T=1,F=0$. Eq.~\eqref{eq:complement} now enforces every literal negation, and Lemma~\ref{lem:clause} guarantees that every clause of $\Phi$ is satisfied.

Conversely, suppose $\Phi$ is satisfiable. Set $T=1,F=0$, use corresponding values for $X$ and $\bar X$ from the 3-SAT assignment, and choose the auxiliary values from Eq.~\eqref{eq:clause-witness}. All relations hold. Introduce and connect the variable gadgets and ladders related to the logical variables and relations. The support graph has maximum degree four. Color each variable gadget and its attached ports uniformly with the logical bit associated to their corresponding variable to have external and variable gadget edges uncut. In each ladder, choose the local realization supplied by Lemma~\ref{lem:relation} and Table~\ref{tab:states}. The selected teeth are pairwise disjoint and each lies in a single ladder. By Lemma~\ref{lem:relation} the two halves of every tooth form a tight domino and their semicuts cover exactly the handle cut, giving zero residual.

The total tooth count is odd by the duplication, padding, and anchor construction. Moreover, $R(T,F;T,F)=R(1,0;1,0)$ uses the opposite-rail state and contributes three rung teeth by itself. There are therefore at least three teeth. All comb conditions hold, and Eq.~\eqref{eq:slack-identity} gives slack $-1$.

We have proved the required equivalence
\begin{equation}\label{eq:reduction}
 \Phi\text{ is satisfiable}
 \quad\Longleftrightarrow\quad
 x^*(\Phi)\text{ violates a general comb inequality}.
\end{equation}
Since 3-SAT is NP-complete, this proves NP-hardness. Finally, a comb can be used as certificate and has at most $\lfloor n/2\rfloor$ teeth, so its handle and teeth have polynomial-size descriptions. Checking the comb conditions and computing its slack can therefore be done in polynomial time. If subtour feasibility and half-integral points are treated as an input promise, these checks suffice. Alternatively, we may define the decision problem to reject inputs that are outside $\SEP(n)$ or are not half-integral. In this formulation, the verifier checks half-integrality in polynomial time and also checks $x^*\in\SEP(n)$, which is possible in polynomial time as described in Section~\ref{subsec:SEP}. The decision problem is therefore in $\NP$. Nonplanarity and the support-degree bound are also checkable in polynomial time~\cite{Hopcroft1974}, so the same certificate argument establishes membership in NP when these restrictions are included in the definition of the decision language.
\end{proof}

\subsubsection{Proof of the remaining corollary}\label{subsubsec:cor_proofs}

\begin{proof}[Proof of Corollary~\ref{cor:small-teeth}]
For a satisfiable formula, Table~\ref{tab:states} constructs the violated comb using only rung teeth and teeth $P_1\mid P_2$, whose two halves each have size one in the first case and size two in the second case. For an unsatisfiable formula, Eq.~\eqref{eq:reduction} shows that $x^*(\Phi)$ violates no general comb inequality, and hence no inequality in this restricted family.
\end{proof}

\section{Conclusion}\label{sec:conclusion}

This work resolves a longstanding open question in polyhedral combinatorics: the computational complexity of comb separation. Although comb inequalities have played an important role in exact TSP methods since their introduction in their general form in 1979 by Gr\"otschel and Padberg~\cite{Grotschel1979a,Grotschel1979b}, neither a polynomial-time exact separation algorithm nor an NP-hardness proof was known~\cite{applegate2007, Cook2019}. Our reduction from 3-SAT establishes that deciding whether a violated comb exists is NP-complete and that exact comb separation is NP-hard. Consequently, unless $\PP=\NP$, the polynomial-time separation algorithm conjectured by Padberg and Gr\"otschel in 1985~\cite{padberg1985} does not exist. The hardness persists under strong restrictions. The constructed vectors belong to the subtour elimination polytope, have coordinates in $\{0,\tfrac12,1\}$, and have support graphs of maximum degree four that can explicitly be made nonplanar. Moreover, the decision problem remains NP-complete when admissible teeth contain only two or four vertices, divided equally between the handle and its complement.

The result also has consequences beyond exact separation. Every violated comb at a half-integral point of $\SEP(n)$ is maximally violated. Unless $\PP=\NP$, no polynomial-time algorithm can therefore always estimate the maximum violation with absolute error strictly below one half, or guarantee finding a violated comb whenever one has violation at least a fixed $\varepsilon\in(0,1]$ (using the scaling of Eq.~\eqref{eq:comb}). Furthermore, membership in the comb relaxation $\CP(n)$ is coNP-complete and exact optimization of arbitrary linear objectives over this relaxation is NP-hard under a polynomial-time Turing reduction.

Several directions merit further investigation. One is to determine whether hardness persists when the input is restricted to an extreme point of $\SEP(n)$. Such a result would connect the present construction more directly to the fractional points that can arise as optimal solutions of the subtour relaxation. Another direction is to refine the boundary between tractable and intractable separation. The polynomial-time solvability of the half-integral planar case~\cite{Fleischer1999} motivates studying how the complexity changes with structural parameters of the support graph. This question may identify useful classes beyond those covered by existing separation algorithms. The complexity of separating larger inequality families presents a further challenge. Extending the reduction to domino-parity, clique-tree, or path inequalities requires more than containment of the comb family. Unsatisfiable 3-SAT formulas must produce vectors satisfying every inequality in the larger family.

Finally, despite the worst-case hardness established here, very large TSP instances have already been solved to optimality~\cite{applegate2007}. There remains substantial room for further algorithmic advances through more effective heuristics, efficient exact methods for restricted classes, and their combined use within branch-and-cut solvers. We hope that our NP-completeness proof and the gadgets developed for the reduction will inform the design of new heuristics for comb separation.

\section*{Acknowledgment}
This research was financially supported by the Natural Sciences and Engineering Research Council of Canada (NSERC) and Pasqal Canada through an Alliance grant (Alliance 59081-2023). The authors acknowledge the use of OpenAI's GPT-6 and Anthropic's Claude Fable 5.1 models throughout the development of this work. Their contributions included proposing reductions and graph constructions, developing candidate proof arguments, assisting with computational checks and searches for counterexamples. The authors formulated the research question, directed the investigation, critically examined proposed arguments and individual proof steps and guided the simplification of the results. The authors retain full responsibility for the mathematical claims, the interpretation of the results and any remaining errors.

\bibliographystyle{spmpsci}
\bibliography{bib}
\end{document}